\tikzstyle{startstop} = [rectangle, rounded corners, minimum width=3cm, minimum height=1cm, text centered, draw=black, fill=red!30]
\tikzstyle{process} = [rectangle, minimum width=3cm, minimum height=1cm, text centered, draw=black, fill=orange!30]
\tikzstyle{decision} = [diamond, minimum width=3cm, minimum height=1cm, text centered, draw=black, fill=green!30]
\tikzstyle{arrow} = [thick,->,>=stealth]
\newcommand{\ew}{\mathcal{E}}%
\newcommand{\cw}{\mathcal{C}}%
\newcommand{\bb}{\partial M}%
\newcommand{\NN}{\mathcal{N}}%
\newcommand{\UU}{\mathcal{U}}%
\newcommand{\ZZ}{\mathcal{Z}}
\newcommand{\CC}{\mathcal{C}}
\newcommand{\RR}{\mathcal{R}}
\newcommand{\ewV}{\mathcal{E}(V_1\cup \cdots \cup V_n)}%
\newcommand{\ewW}{\mathcal{E}(W_1\cup \cdots \cup W_n)}%
\newcommand{\ewX}{\mathcal{E}(X_1\cup \cdots \cup X_n)}%
\newcommand{\ewY}{\mathcal{E}(Y_1\cup \cdots \cup Y_n)}%
\newcommand{\GMI}{\tilde{\Gamma}}
\newcommand{\GMay}{\Gamma_{2\to \text{all}}}
\newcommand{\Vall}{V_1\cup \cdots \cup V_n}
\newcommand{\Xall}{X_1\cup \cdots \cup X_n}
\newcommand{\ZZi}{\mathcal{Z}_{in}}
\newcounter{Counter}
\newtheorem{definition}{Definition}[section]
\newtheorem{lemma}[definition]{Lemma}
\newtheorem{theorem}[definition]{Theorem}
\newtheorem{corollary}[definition]{Corollary}
\newtheorem{remark}[definition]{Remark}
\newtheorem{assumption}{Assumption}
\title{\boldmath New conditions for multipartite entanglement wedge connectivity in $n$-to-$n$ holographic scattering}
\author{Bowen Zhao}
\affiliation{Beijing Institute of Mathematical Sciences and Applications, Beijing, China}
\emailAdd{bowenzhao@bimsa.cn}
\abstract{
We investigate the geometry of entanglement wedges for asymptotic $n$-to-$n$  scattering configurations in asymptotically AdS$_3$ spacetimes.
Extending the $2$-to-$2$ Connected Wedge Theorem, we establish a strictly  weaker sufficient condition for the input entanglement wedge $\mathcal{E}(V_1\cup\cdots\cup V_n)$ to be connected: the existence of a single pair of input regions satisfying a $2$-to-all causal intersection condition already forces full multipartite wedge connectivity.
We also derive novel necessary conditions, showing that when the input wedge is connected, certain output ridges must enter the input wedge, and we organize these consequences into a layered reduction on the boundary lattice.
Furthermore, we analyze the generalized bulk scattering region  $\mathcal{S}_E = \mathcal{E}(V_1\cup\cdots\cup V_n)\cap \mathcal{E}(W_1\cup\cdots\cup W_n)$ and obtain necessary conditions for it to be nonempty; for $n>2$ these conditions are stronger than mere wedge connectedness. 
Our results provide new geometric restrictions on multipartite entanglement in holography and clarify the holographic dictionary for multi‑partite scattering processes, while also highlighting intrinsic limitations for $n>2$.
}
\keywords{AdS/CFT Correspondence,
          Holographic Entanglement Entropy,
          Entanglement Wedge,
          Scattering Amplitudes,
          Superadditivity}
\begin{document}
\maketitle
\flushbottom

\section{Introduction}

The AdS/CFT correspondence posits a duality between a quantum gravity theory in an
asymptotically Anti-de Sitter (AdS) spacetime $M$ and a conformal field theory (CFT) on its
timelike boundary $\partial M$ \cite{maldacena1999AdSCFT,witten1998AdScft}.
A foundational requirement of the correspondence is the consistency of causal structure
between the bulk and the boundary.
Gao and Wald proved that, assuming the null curvature condition and global hyperbolicity,
bulk causality cannot violate boundary causality: if two boundary points are connected by a
causal curve through the bulk, they are also connectable by a causal curve restricted to the
boundary \cite{gao2000theorems}.

A more subtle consistency requirement emerges for asymptotic quantum tasks involving
multiple boundary regions.
Consider an asymptotic $n$-to-$n$ scattering configuration on the boundary $\bb$, specified
by $n$ disjoint input regions $V_1,\cdots,V_n$ and $n$ disjoint output regions
$W_1,\cdots,W_n$.
Local scattering processes can occur in the bulk that have no direct boundary counterpart,
referred to as \emph{bulk-only scattering}
\cite{gary2009bulkonly,heemskerk2009bulkonly,penedones2011bulkonly,maldacena2017bulkpointsingularity}.

For the $2$-to-$2$ case ($n=2$), the Connected Wedge Theorem (CWT)
\cite{maypenington2020holographic} establishes that for such a bulk-only process, the
associated boundary input regions $V_1$ and $V_2$ must share $O(1/G_N)$ mutual information,
$I(V_1:V_2)\sim O(1/G_N)$.
This implies that a local bulk scattering process necessitates nonlocal boundary protocols.

Via the Hubeny--Rangamani--Ryu--Takayanagi (HRRT) prescription
\cite{RT2006formula,HRT2007covariant}, this large mutual information admits a geometric
interpretation: the entanglement wedge of $V_1\cup V_2$ becomes connected.
Under standard assumptions (AdS hyperbolicity, the null curvature condition, and the
maximin construction for HRRT surfaces), the CWT may be stated geometrically
\cite{maypenington2020holographic}:

\begin{theorem}\label{thm:CWT}
Under standard assumptions, for a $2$-to-$2$ bulk-only scattering configuration, the entanglement wedge of $V_1\cup V_2$ is connected.
\end{theorem}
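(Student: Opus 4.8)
The plan is to argue by contradiction: I assume that $\ew(V_1 \cup V_2)$ is disconnected and derive a contradiction with the hypothesis that the configuration admits bulk-only scattering. Because the standard assumptions include the maximin construction, I would first fix a maximin Cauchy slice $\Sigma$ on which the HRRT surface $\gamma_{12}$ of $V_1 \cup V_2$ is realized as a minimal surface. The disconnection hypothesis means $\gamma_{12} = \gamma_1 \sqcup \gamma_2$, with each $\gamma_i$ homologous to $V_i$; correspondingly the homology region on $\Sigma$ splits as $H = H_1 \sqcup H_2$, the bulk domains of dependence of the $H_i$ are the individual wedges $\ew(V_1)$ and $\ew(V_2)$, and the complementary bulk region separating them is connected. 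Entanglement wedge nesting guarantees $\ew(V_1), \ew(V_2) \subseteq \ew(V_1 \cup V_2)$, so in the split phase these two pieces are disjoint and the connected complementary wedge lies between them.

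Next I would extract from the bulk-only hypothesis a bulk scattering point $p \in J^+(V_1) \cap J^+(V_2) \cap J^-(W_1) \cap J^-(W_2)$, together with past-directed causal curves from $p$ to $V_1$ and $V_2$ and future-directed causal curves from $p$ to $W_1$ and $W_2$. The crux is to show that such a $p$ cannot coexist with a genuinely minimal split surface $\gamma_1 \sqcup \gamma_2$. Concretely, I would track the null congruences generating $\partial J^-(p)$ back toward the inputs and $\partial J^+(p)$ forward toward the outputs; since $p$ lies simultaneously in the causal future of both inputs and the causal past of both outputs, these light sheets thread through the connected region separating $\ew(V_1)$ from $\ew(V_2)$, providing a causal \emph{bridge} across the gap that the split surface is supposed to seal off.

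The analytic engine is then the focusing theorem: under the null energy condition the expansion along each congruence is non-increasing, so cross-sectional areas vary monotonically along the generators. I would use this monotonicity to assemble, from portions of $\gamma_1$, $\gamma_2$ and the focused light sheets running toward $p$, a competitor surface $\gamma'$ that is again homologous to $V_1 \cup V_2$ but connected, and to bound its area strictly below $\mathrm{Area}(\gamma_1) + \mathrm{Area}(\gamma_2)$. A connected homologous surface of smaller area contradicts the minimality of $\gamma_{12}$ on $\Sigma$ supplied by the maximin prescription, completing the contradiction and forcing $\ew(V_1 \cup V_2)$ to be connected.

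I expect the main obstacle to lie in reconciling the focusing estimate with the two-step structure of maximin. The focusing theorem controls areas along null directions, whereas maximin first minimizes on a spacelike slice and only then maximizes over slices, so I must verify that the competitor $\gamma'$ either lives on $\Sigma$ or can be deformed onto an admissible slice without the outer maximization reinstating the area it saved, and that the homology constraint survives the null deformation. Ensuring the bridge construction yields a \emph{strict} area deficit rather than a borderline equality, and controlling the degenerate cases where the congruences from $p$ develop caustics or where the relevant light sheets anchor near the conformal boundary $\bb$, are the delicate technical points that would demand the most care.
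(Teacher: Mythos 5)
Your skeleton (contradiction, maximin slice, focusing, minimality) matches the standard argument, but the two load-bearing elements of your construction are wrong, and they are precisely the ones the real proof gets right. First, you conflate a connected \emph{entanglement wedge} with a connected \emph{HRRT surface}. In the connected phase of $\ew(V_1\cup V_2)$, the HRRT surface is the complementary pairing $RT(X_1)\cup RT(X_2)$ (with $X_1,X_2$ the two regions separating $V_1$ from $V_2$), which is still a disconnected surface; there is no "connected competitor $\gamma'$" to build. Worse, any surface obtained by attaching a bridge to $\gamma_1\sqcup\gamma_2$ has \emph{more} area, and nothing in your sketch supplies the area deficit — you flag the strictness of the deficit as a delicate point, but as described the construction cannot produce a deficit at all. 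The actual proof (the null-membrane argument of May--Penington--Sorce, reproduced and generalized in Section \ref{sec:may2022}) derives the contradiction differently: it pushes $RT(V_1)\cup RT(V_2)$ along null sheets to produce surfaces on $\Sigma_1$ homologous to $X_1$ and $X_2$, obtaining $\sum_i|RT(V_i)| \geq \sum_i|RT(X_i)|$ with strict inequality (cf. \eqref{eq:focusing_may_V}--\eqref{eq:focusing_may_combine}), which contradicts the dominance of the disconnected phase $S(V_1V_2)=S(V_1)+S(V_2)$; purity (item 5 of Assumption \ref{assumption:1}) is what lets the complementary surfaces serve as the connected-phase candidate.

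Second, your analytic engine is pointed at the wrong null sheets. Focusing gives one-way area monotonicity only along congruences fired orthogonally from \emph{extremal} surfaces, where $\theta=0$ initially and hence $\theta\leq 0$ thereafter. The light cones $\partial J^{\pm}(p)$ of the scattering point have positive expansion moving away from the vertex, so cross-sectional areas along your "bridge" grow rather than shrink, and no useful bound results. In the genuine proof the membrane is assembled from $\partial J^{+}[RT(V_i)]$ (the lift) and past-directed sheets anchored to the output data (e.g. $\NN_{W_i}=\partial J^{-}[RT(W_i)]$ in this paper's version), both of which are extremal-surface-anchored so focusing applies on each; the bulk scattering point enters \emph{only} topologically/causally, to guarantee that the ridge $\partial J^{+}[RT(V_1)]\cap\partial J^{+}[RT(V_2)]$ lies to the causal past of the output sheets so that the membrane closes up and the intersection curves are distinct. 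It contributes no area estimate. Your proposal inverts this: it makes $p$ the source of the area comparison and drops the output-anchored sheets from the construction entirely, which is why the argument cannot close.
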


Recent works have elevated the Connected Wedge Theorem to a precise equivalence \cite{zhao2025proof,lima2025sufficientGCWT,caminiti2024holographic}.
They show that the existence of $O(1/G_N)$ mutual information between the two input regions $V_1$ and $V_2$ is equivalent to the non-emptiness of a generalized bulk scattering region, denoted $\tilde{S}_E$.
This region is defined as the intersection of the entanglement wedge of the union of input regions (with the wedges of individual input regions removed) and the entanglement wedge of the union of output regions (with the wedges of individual output regions removed).
This provides a complete geometric characterization of quantum nonlocal scattering in the $2$-to-$2$ case. We refer the reader to \cite{zhao2025proof} for a comprehensive review.

In this paper, we extend these analyses to general asymptotic $n$-to-$n$ scattering processes with $n>2$, working exclusively in asymptotically $\mathrm{AdS}_3$ spacetimes whose dual conformal field theory lives on the 
cylindrical boundary $\mathbb{R}\times S^1$.

A sufficient condition for the connectedness of the input entanglement wedge $\ewV:=\ew(V_1\cup\cdots\cup V_n)$ was previously derived in
Ref.~\cite{may2022nton}:

\begin{theorem}\label{thm:n-n-CWT}
Under standard assumptions, if the $2$-to-all graph $\Gamma_{2\to \mathrm{all}}$ is
connected, then the entanglement wedge $\ewV$ is connected.
\end{theorem}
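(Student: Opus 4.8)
The plan is to reduce the statement to the $2$-to-$2$ Connected Wedge Theorem (Theorem~\ref{thm:CWT}) and then promote pairwise connectivity to global connectivity using entanglement wedge nesting. The graph $\GMay$ encodes, for each pair of inputs $(V_i,V_j)$, the existence of a bulk scattering point lying in the causal future of both $V_i$ and $V_j$ and in the causal past of every output $W_1,\dots,W_n$. My first step is to observe that whenever $(V_i,V_j)$ is an edge of $\GMay$, the associated bulk point automatically lies in the causal past of any two chosen outputs, say $W_1$ and $W_2$. It therefore witnesses a genuine $2$-to-$2$ bulk scattering configuration with inputs $V_i,V_j$ and outputs $W_1,W_2$, so Theorem~\ref{thm:CWT} applies and yields that $\ew(V_i\cup V_j)$ is connected.

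The second step converts these pairwise statements into information about the single large wedge $\ewV$. Since $V_i\cup V_j\subseteq \Vall$, entanglement wedge nesting (a consequence of the maximin construction under the standard assumptions) gives the inclusion $\ew(V_i\cup V_j)\subseteq \ewV$. A connected subset of a region must lie within a single connected component; hence the connected wedge $\ew(V_i\cup V_j)$ is contained in one connected component of $\ewV$, and in particular the portions of $\ewV$ anchored to $V_i$ and to $V_j$ belong to the same component. Defining an equivalence relation on $\{V_1,\dots,V_n\}$ by ``anchored to a common connected component of $\ewV$,'' each edge of $\GMay$ identifies its two endpoints. Because $\GMay$ is connected, any two inputs are joined by a path of edges, and transitivity of the equivalence relation forces all of $V_1,\dots,V_n$ into a single connected component of $\ewV$.

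It remains to upgrade ``all inputs share one component'' to ``$\ewV$ is connected.'' Here I would invoke that the entanglement wedge is the bulk causal domain of a homology region whose only non-HRRT boundary is $\Vall$; consequently every connected component of $\ewV$ must be anchored to at least one of the $V_k$, so there are no floating components disconnected from the boundary. Combined with the previous step, this shows $\ewV$ has exactly one connected component.

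The main obstacle I anticipate is the first step: making precise that an edge of $\GMay$ delivers a configuration satisfying the exact causal hypotheses under which the geometric form of Theorem~\ref{thm:CWT} is valid --- in particular, checking that restricting attention to only two of the $n$ outputs does not spoil the scattering condition, and that the ``bulk-only'' requirement invoked in the $2$-to-$2$ statement is either inherited by the sub-configuration or irrelevant to the purely geometric conclusion. A secondary technical point, handled in the last step, is ensuring the homological anchoring argument genuinely rules out spurious components; this is routine given the maximin framework but must be stated carefully to keep the reduction airtight.
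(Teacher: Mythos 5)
Your reduction breaks at Step 1, and the obstacle you flag at the end is not a technicality but a structural failure. First, an edge of $\GMay$ is defined by $J^+[\ew(V_i)]\cap J^+[\ew(V_j)]\cap\bigcap_{k} J^-[\ew(W_k)]\neq\emptyset$, i.e.\ in terms of causal futures and pasts of \emph{entanglement wedges}. Since generically $\ew(V_i)$ strictly contains the causal wedge $\cw(V_i)$, one has $J^+[\ew(V_i)]\supsetneq J^+[c_i]$ (and likewise $J^-[\ew(W_k)]\supsetneq J^-[r_k]$), so an edge does \emph{not} supply a bulk point in $J^+[c_i]\cap J^+[c_j]\cap J^-[r_1]\cap J^-[r_2]$; the hypothesis of the point-based Theorem~\ref{thm:CWT} (a ``bulk-only scattering configuration'') is strictly stronger than what an edge gives you, and at best you could invoke the wedge-based generalization of ref.~\cite{may2021region}. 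Second, and fatally, any $2$-to-$2$ theorem concerns the wrong regions: a $2$-to-$2$ configuration with inputs $c_i,c_j$ and outputs $r_1,r_2$ has decision regions $\hat V_i=\hat J^+[c_i]\cap\hat J^-[r_1]\cap\hat J^-[r_2]$, which for $n>2$ strictly contain the $n$-to-$n$ regions $V_i=\hat J^+[c_i]\cap\bigcap_{k=1}^n\hat J^-[r_k]$. The theorem then yields connectedness of $\ew(\hat V_i\cup \hat V_j)$, not of $\ew(V_i\cup V_j)$, and connectedness does not descend to subregions: monotonicity gives $I(V_i:V_j)\le I(\hat V_i:\hat V_j)$, which is the wrong direction. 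Your Step 2 breaks as well, since $\hat V_i\not\subseteq\Vall$, so nesting no longer places $\ew(\hat V_i\cup\hat V_j)$ inside $\ewV$. (A further problem: the reduced configuration need not be bulk-only at all, because $\hat J^+[c_i]\cap\hat J^+[c_j]\cap\hat J^-[r_1]\cap\hat J^-[r_2]$ can be nonempty on $\bb$ even though $V_i\cap V_j=\emptyset$.)

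The missing ingredient, which is exactly what the paper supplies in Section~\ref{sec:may2022}, is to keep all $n$ output null sheets in the argument instead of discarding $n-2$ of them: one builds the surface $\ZZi$ from the future null sheets of the $RT(V_i)$, shows that the scattering condition forces the intersection curves $\CC_i=\NN_{W_i}\cap\ZZi$ to be distinct, and then the focusing comparison $\sum_i|RT(V_i)|\ge\sum_i|RT(X_i)|$ contradicts full (and, with the enlarged-region variant, partial) disconnectedness of $\ewV$. Notably this route proves the stronger Theorem~\ref{thm:weaker-necessary}: a \emph{single} edge already suffices, so no graph-spreading step is needed at all. Your Steps 2--3 are themselves sound --- they are the geometric counterpart of the paper's Lemma~\ref{lemma:MI_EW}, where pairwise $O(1/G_N)$ mutual information along a connected graph plus monotonicity yields a connected joint wedge --- but they can only be run after pairwise connectedness of the \emph{actual} regions $V_i,V_j$ is established, which is precisely what the black-box reduction cannot deliver.
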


The $2$-to-all graph $\Gamma_{2\to \mathrm{all}}$ has vertices
$\{1,\dots,n\}$ corresponding to the input regions $V_1,\dots,V_n$.
An edge between vertices $i$ and $j$ is inserted if
\begin{equation}
    J^+[\ew(V_i)] \cap J^+[\ew(V_j)] \cap \bigcap_{k=1}^n J^-[\ew(W_k)] \neq \emptyset .
\end{equation}
We note that Ref.~\cite{caminiti2026geodesics} recently constructed explicit examples showing that this theorem fails to have a converse even in the vacuum, indicating that the multipartite case is intrinsically more constrained.

Our first contribution is to establish a strictly weaker necessary condition for the connectedness of $\ewV$ than Theorem~\ref{thm:n-n-CWT}.
Specifically, we show that the existence of a \emph{single} suitable pair of input regions already suffices:

\begin{theorem}\label{thm:weaker-necessary}
Assume the standard conditions listed in Assumption~\ref{assumption:1}.
If there exists a pair $i\neq j$ such that
\begin{equation}\label{eq:intro_weaker_necessary}
    J^+[\ew(V_i)] \cap J^+[\ew(V_j)] \cap \bigcap_{k=1}^{n} J^-[\ew(W_k)] \neq \emptyset ,
\end{equation}
then the entanglement wedge $\ewV$ is connected.
\end{theorem}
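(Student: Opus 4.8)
The plan is to reduce the $n$-to-$n$ statement to the already-established $2$-to-$2$ Connected Wedge Theorem (Theorem \ref{thm:CWT}) for the single privileged pair, and then to propagate the resulting pairwise connectivity to the full union via entanglement wedge nesting. Concretely, fix the pair $i \neq j$ supplied by the hypothesis together with a bulk point $x$ lying in the intersection \eqref{eq:intro_weaker_necessary}, so that $x \in J^+[\mathcal{E}_W(V_i)] \cap J^+[\mathcal{E}_W(V_j)]$ and $x \in J^-[\mathcal{E}_W(W_k)]$ for every $k \in \{1,\dots,n\}$. The entire argument then hinges on recognizing that the apparently $n$-dependent condition \eqref{eq:intro_weaker_necessary}, read at a single pair, already encodes a complete $2$-to-$2$ scattering datum.

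First I would isolate a $2$-to-$2$ sub-configuration. Since $x$ lies in the causal past of \emph{every} output wedge, it lies in particular in the past of any two of them, say $\mathcal{E}_W(W_1)$ and $\mathcal{E}_W(W_2)$, whence
\begin{equation}
    x \in J^+[\mathcal{E}_W(V_i)] \cap J^+[\mathcal{E}_W(V_j)] \cap J^-[\mathcal{E}_W(W_1)] \cap J^-[\mathcal{E}_W(W_2)].
\end{equation}
This is exactly the non-emptiness of the (entanglement-wedge) bulk scattering region for the $2$-to-$2$ problem with inputs $V_i, V_j$ and outputs $W_1, W_2$. Because the wedges and the bulk causal relations $J^\pm$ are intrinsic to $M$ and do not refer to the discarded regions, Assumption \ref{assumption:1} guarantees that the hypotheses of Theorem \ref{thm:CWT} are inherited by this sub-configuration, and applying it yields that $\mathcal{E}(V_i \cup V_j)$ is connected.

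Next I would upgrade this pairwise statement to the union by monotonicity. Entanglement wedge nesting, applied to the inclusion $V_i \cup V_j \subseteq \Vall$, gives $\mathcal{E}(V_i \cup V_j) \subseteq \ewV$ as bulk regions. By the previous step $\mathcal{E}(V_i \cup V_j)$ is a single connected bulk region whose conformal boundary contains both $V_i$ and $V_j$; being contained in $\ewV$, it forces $V_i$ and $V_j$ into a common connected component of $\ewV$. Consequently $\ewV$ cannot coincide with the fully factorized configuration $\mathcal{E}(V_1) \sqcup \cdots \sqcup \mathcal{E}(V_n)$, i.e.\ it is connected in the relevant sense, which completes the argument. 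This also clarifies why one edge suffices where Theorem \ref{thm:n-n-CWT} demands a connected graph: a single pairwise junction already removes the wedge from the totally disconnected phase.

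The main obstacle I anticipate is the first reduction step, namely verifying that the causal data carried by the $\mathcal{E}_W$ wedges of the sub-configuration genuinely satisfies the precise input hypotheses of Theorem \ref{thm:CWT} rather than merely resembling them. One must confirm that restricting attention to inputs $V_i, V_j$ and outputs $W_1, W_2$ preserves the causal arrangement that the $2$-to-$2$ theorem presupposes, that the operative condition is the non-emptiness of the entanglement-wedge scattering region (so that no separate verification of a boundary obstruction is required), and that the remaining regions $W_3,\dots,W_n$ introduce no interference — intuitively clear, since every wedge and causal cone is computed in the same bulk $M$, but requiring careful bookkeeping of the maximin construction underlying Assumption \ref{assumption:1}. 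By comparison, the concluding nesting step is routine, resting only on $\mathcal{E}(A) \subseteq \mathcal{E}(B)$ for $A \subseteq B$ and on the elementary fact that a connected subregion lies in a single component of its container.
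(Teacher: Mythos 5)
Your proposal has two genuine gaps, either of which is fatal. First, the reduction to a $2$-to-$2$ sub-problem does not go through. Theorem \ref{thm:CWT} takes as hypothesis a bulk-only \emph{point} scattering configuration (non-empty $J^+[c_i]\cap J^+[c_j]\cap J^-[r_1]\cap J^-[r_2]$ with empty boundary counterpart); the wedge-based hypothesis you invoke belongs to the generalized version of \cite{may2021region}, and even that theorem presupposes a valid $2$-to-$2$ boundary setup. The four regions $V_i,V_j,W_1,W_2$ of the $n$-to-$n$ problem do not constitute one: the genuine $2$-to-$2$ decision regions for the points $c_i,c_j,r_1,r_2$ are the larger sets $\hat{J}^+[c_i]\cap\hat{J}^-[r_1]\cap\hat{J}^-[r_2]$, and their disjointness — equivalently, emptiness of the pair's boundary scattering region $\hat{J}^+[c_i]\cap\hat{J}^+[c_j]\cap\hat{J}^-[r_1]\cap\hat{J}^-[r_2]$ — is \emph{not} implied by the $n$-to-$n$ conditions \eqref{eq:boundary_condition}, which only forbid boundary points that signal \emph{all} $n$ outputs. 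And even if one applied the $2$-to-$2$ theorem to the enlarged regions, connectivity of their joint wedge would not descend to $\ew(V_i\cup V_j)$: mutual information only decreases when regions shrink, so this implication runs in the wrong direction.

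Second, and independently, your concluding step proves strictly less than the theorem claims. Showing that $V_i$ and $V_j$ share a component of $\ewV$ only excludes the \emph{fully} disconnected phase; the theorem asserts connectedness in the sense of Lemma \ref{lemma:connected_characterization}, i.e.\ every nontrivial bipartition carries $O(1/G_N)$ mutual information, and your argument leaves all partially connected phases untouched (e.g.\ $\ew(V_i\cup V_j)$ connected while each $\ew(V_k)$, $k\neq i,j$, splits off as its own component). Information-theoretically this cannot be repaired: a single edge does not make the mutual-information graph $\GMI$ of Lemma \ref{lemma:MI_EW} connected when $n>2$. This is precisely why the paper's proof is not a reduction at all but a global focusing argument: it builds the surface $\ZZi$ from the future null sheets $\NN_{V_i}$ (or, for partially connected phases, from the horizons of enlarged input regions), shows that the single-pair hypothesis \eqref{eq:intro_weaker_necessary} forces all intersection curves $\CC_k=\NN_{W_k}\cap\ZZi$ to be distinct, and then derives the area inequality \eqref{eq:focusing_may_combine}, which contradicts both the fully disconnected and the partially connected configurations. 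The propagation from one pair to full connectivity is geometric (via the ridge structure of $\ZZi$), not a consequence of monotonicity of mutual information, and that is the key idea your proposal is missing.
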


The proof reveals an even weaker condition than \eqref{eq:intro_weaker_necessary}, albeit one with a less transparent physical interpretation.

We then derive new, independent \emph{necessary} conditions for the connectedness of multipartite entanglement wedges.
In particular, we show that when both the input wedge $\ewV$ and the output wedge $\ewW:=\ew(W_1\cup\cdots\cup W_n)$ are connected, one necessarily finds a pair of enlarged output regions whose entanglement wedge intersects $\ewV$.
A complete statement is given in Theorem~\ref{thm:sufficient_n}.

Finally, motivated by the central role of the entanglement wedge intersection
$\mathcal{S}_E=\ew(V_1\cup V_2)\cap \ew(W_1\cup W_2)$ in the $2$-to-$2$ case, we analyze its
generalization to $n$-to-$n$ processes.
We provide a necessary condition for $\mathcal{S}_E=\ewV\cap \ewW \neq\emptyset$
(Theorem~\ref{thm:S_E_condition}).
Our results indicate that for $n>2$, the existence of a nontrivial $\mathcal{S}_E$ is governed
by constraints stronger than simple connectedness of the input and output wedges, reflecting
the intrinsically multipartite nature of the problem.

The paper is organized as follows.
Section~\ref{sec:boundary_setup_n} reviews the geometric setup for asymptotic $n$-to-$n$
scattering on the boundary $\bb$.
Section~\ref{sec:Causal_anchoring} recalls the causal anchoring principle.
Section~\ref{sec:ridge_relation} summarizes key geometric results concerning intersections
of bulk causal boundaries and introduces the geometric ordering of ridges.
Section~\ref{sec:characterize_connect} reviews criteria for connectedness and
disconnectedness of multipartite entanglement wedges.
Our results on necessary and sufficient conditions for the connectedness of $\ewV$ are
presented in Sections~\ref{sec:may2022} and~\ref{sec:connect_cons}, respectively.
Section~\ref{sec:S_E_condition} discusses conditions under which the generalized scattering
region $\mathcal{S}_E$ is nonempty.
We conclude with a discussion in Section~\ref{sec:discussion}.
In particular, Section~\ref{sec:comparison_Null_sheet} compares different null-sheet
constructions used in the literature.

\subsection{Notations and Assumptions}\label{subsec:notation_assumptions}
Here we summarize the notations, conventions, and assumptions used throughout this paper.

We adopt natural units with $\hbar=c=1$ and set the AdS length scale $l_{\text{AdS}}=1$, while keeping Newton's constant $G_N$ explicit. Our notation follows ref. \cite{waldGR}, using the mostly-plus metric signature.

\begin{itemize}
\item \textbf{Spacetime regions:} Bulk regions are denoted by script letters ($\mathcal{U}, \mathcal{V}, \mathcal{W}, \cdots$), while boundary regions use straight capitals ($U, V, W, \cdots$). The same symbol may denote either a causal diamond or its Cauchy surface, with the meaning clear from context.

\item \textbf{Cauchy slices:} Bulk Cauchy slices are denoted by $\Sigma$ with appropriate subscripts, boundary Cauchy slices by $\hat{\Sigma}$ with subscripts. By abuse of notation, $\Sigma$ may also refer to Cauchy slices of the conformally compactified spacetime.

\item \textbf{Causal structure:} The bulk causal future/past of region $\mathcal{V}$ is $J^\pm[\mathcal{V}]$; for boundary region $V$, we write $J^\pm[V]$ for bulk causal influence and $\hat{J}^\pm[V]$ for boundary causal influence.

\item \textbf{Domains of dependence:} The bulk domain of dependence of $\mathcal{V}$ is $\mathcal{D}[\mathcal{V}]$; the boundary domain of dependence of $V$ is $\hat{D}[V]$. The future and past horizons of a causal domain $V$ is $\hat{H}^\pm[V]$.

\item \textbf{Entanglement structures:} For boundary region $V$, we denote the entanglement wedge by $\ew(V)$, causal wedge by $\cw(V)$, and HRRT surface by $\text{RT}(V)$.

\item \textbf{Complements:} The causal complement (bulk or boundary) uses superscript $c$, while set-theoretic complement within a Cauchy slice uses superscript prime notation ($'$).
\end{itemize}

\begin{assumption}\label{assumption:1}
We assume throughout that:
\begin{enumerate}
\item The bulk spacetime $M$ satisfies the null curvature condition;
\item HRRT surfaces can be found via a maximin procedure;
\item The spacetime is AdS-hyperbolic (the conformal compactification $\overline{M}=M \cup \partial M$ admits a Cauchy slice);
\item The spacetime region between some Cauchy slice preceding $\ew(V_1\cup V_2)$ and some Cauchy slice following $\ew(W_1\cup W_2)$ is singularity-free \footnote{This assumption excludes the formation of black hole horizons in the relevant part of the spacetime; our results therefore apply to geometries where no horizon has formed by the time of the scattering process.}.
\item The global boundary state is pure, ensuring that a boundary region $V$ and its causal complement $V'$ share the same HRRT surface.
\end{enumerate}
\end{assumption}

\section{Review and Preliminary}\label{sec:review}

\subsection{Boundary setup of n-to-n scattering}\label{sec:boundary_setup_n}
The set-up of $n$-to-$n$ asymptotic scattering is discussed in detail in ref. \cite{may2022nton}. We summarize the setup here with a slightly different formulation.

The boundary configuration for the $n$-to-$n$ scattering process consists of input points $c_1$, $c_2$, ..., $c_n$ and output points $r_1$, $r_2$, ..., $r_n$. Let $\hat{\Sigma}_1$ be a boundary spacelike Cauchy slice containing all $c_i$'s and let $\hat{\Sigma}_2$ be a boundary spacelike Cauchy slice containing all $r_j$'s. A case of $n=3$ is shown in Figure \ref{fig:setup-3} for illustration. 
\begin{figure}
    \centering
    \includegraphics[width=0.8\linewidth,trim={5 11.5cm 5 5},clip]{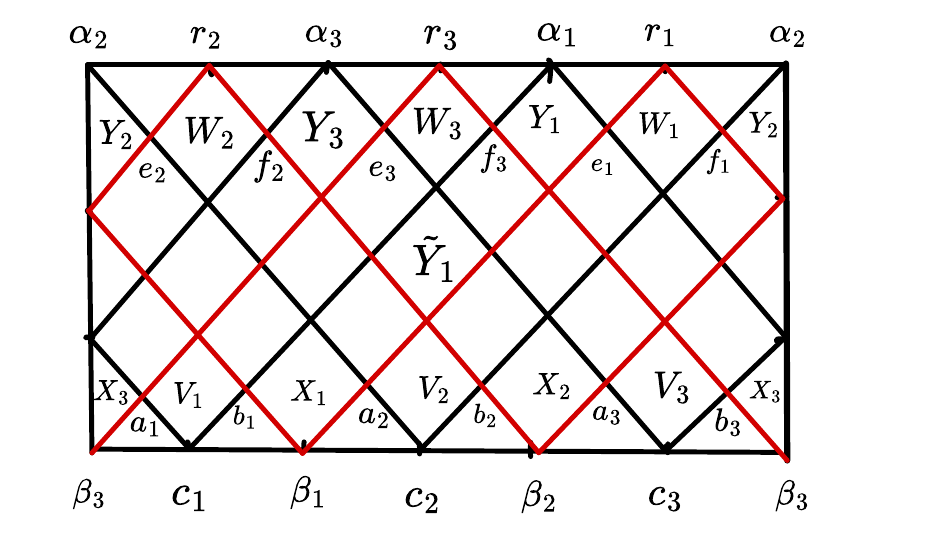}
    \caption{Boundary set-up of $3$-to-$3$ scattering process. Points $c_1, \cdots, c_n$ denote inputs while points $r_1,\cdots, r_n$ denote outputs. The point $\alpha_i$ is the conjugate point of $c_i$ while the point $\beta_j$ is the conjugate point of $r_j$. Input regions $V_i$, with spacelike boundary points $a_i$ and $b_I$, and output regions $W_j$, with spacelike boundary points $e_j$ and $f_j$, are also shown. The causal domain $\tilde{Y}_1$ is marked for later reference.}
    \label{fig:setup-3}
\end{figure}

Recall that the input/decision regions and output regions are defined as
\begin{align*}
    V_i=\hat{J}^+[c_i] \cap \hat{J}^-[r_1]\cap \cdots \cap \hat{J}^-[r_n],\\
    W_i=\hat{J}^-[r_i] \cap \hat{J}^+[c_1]\cap \cdots \cap \hat{J}^+[c_n]
\end{align*}
which are all non-empty sets on  $\bb$ by construction. That is, each input $c_i$ can causally signal all outputs and each output $r_i$ can be causally signaled by all inputs $c_i$. Meanwhile, we require pairwise intersection among these input and output regions to be empty, i.e. 
\begin{align}\label{eq:boundary_condition}
    V_i\cap V_j= \emptyset, \quad W_i \cap W_j &= \emptyset, \quad \forall \, i\neq j \nonumber \\
    V_i \cap W_j &=\emptyset, \quad \forall \, i,j
\end{align}
That is, we require $2$-to-$n$ and $n$-to-$2$ scattering regions to be empty on $\bb$.

We will show that these requirements force the null rays from $c_i$'s and $r_j$'s to form a lattice on $\bb$. To explain this, we label future antipodal points of $c_i$ by $\alpha_i$ and past antipodal points of $r_j$ by $\beta_j$. For example, $\alpha_1$ is the future antipodal point of $c_1$ on $\partial M$. That is, the two future-directed null geodesics emanating from $c_1$ converge at $\alpha_1$. Similarly, the two past-directed null geodesics from $r_j$ converge at $\beta_j$. 

To start with, we label $c_i$ and $r_j$ such that the number increases to the right, or $c_1, \cdots, c_n$ and $r_1, \cdots, r_n$ are ordered counterclockwise when viewed from the future. Since the boundary is topologically $S^1\times \mathbb{R}^1$, we always count modulo $n$; that is, $1=n+1 \mod n$, $0=n \mod n$ and $-1=n-1  \mod n$ etc. All indices are henceforth understood modulo n.
We choose an arbitrary input point to be $c_1$, and the labels of all other input points then follow from the ordering. We still have the freedom to choose which output point is $r_1$. 

Since $c_1$ could causally signal all output points, its future antipodal point $\alpha_1$ must lie between two \textit{adjacent} output points. We can use the freedom of labeling $r_1$ to choose $r_1$ to be the output point to the right of $\alpha_1$. Then, it follows that $\alpha_i$ must lie between $r_{i-1}$ and $r_i$ for all $i\in \{1,\cdots,n\}$. As a result, we have $\alpha_1, r_1, \cdots \alpha_n, r_n$ cyclically ordered (counterclockwise when viewed from future direction) on $\hat{\Sigma}_2$ (we can choose $\hat{\Sigma}_2$ to also contain all $\alpha_i$'s.).

Similarly, since $r_j$ can be causally signaled by all $c_i$, its past antipodal point $\beta_j$ must lie between two \textit{adjacent} input points. It is not difficult to see that $\beta_j$ is forced to lie between $c_j$ and $c_{j+1}$ (Figure \ref{fig:setup-3}). Therefore, on $\hat{\Sigma}_1$ (chosen to also contain all $\beta_j$'s), we have $c_1, \beta_1, \cdots, c_n, \beta_n$ in cyclic order, whose future light rays to the right and to the left form a coordinate lattice on $\bb$. These light rays are also past light rays to the left and to the right, respectively, from $\alpha_1, r_1, \cdots \alpha_n, r_n$.

Figure \ref{fig:setup-3} summarizes the setup for $n=3$. Since we use the flat metric for the conformal boundary $\bb$ as usual, one can trust one's intuition in generalizing Figure \ref{fig:setup-3} to general $n$. 

We also label $X_j$ and $Y_i$ associated to $\beta_j$ and $\alpha_i$. That is,
\begin{align}
    X_i&=\hat{J}^+[\beta_i] \cap \hat{J}^-[\alpha_{1}] \cap \cdots \cap \hat{J}^-[\alpha_n] , \label{eq:X_defn} \\
    Y_i&=\hat{J}^-[\alpha_i] \cap \hat{J}^+[\beta_1] \cap \cdots \cap \hat{J}^+[\beta_n]. \label{eq:Y_defn} 
\end{align}
Since $c_i$ and $\alpha_i$ are antipodal to each other, $V_i$ and $Y_i$ will show up together in following analysis. Similar is true for $X_j$ and $W_j$.
For later convenience, we also label spacelike boundaries of $V_i$, following \cite{may2022nton}. Let $a_i$ be the common boundary between $V_i$ and $X_{i-1}$ and $b_i$ be the common boundary between $V_i$ and $X_{i+1}$. Let $e_i$ be the common boundary between $W_i$ and $Y_i$ and $f_i$ be the common boundary between $W_i$ and $Y_{i+1}$. We note that the relative labelling of $c_i$ and $\alpha_i$ differs from that in ref. \cite{zhao2025proof} (the $\alpha_2$ there would be $\alpha_1$ here).

\subsection{Causal Anchoring Principle}\label{sec:Causal_anchoring}
We recall a crucial observation made in ref. \cite{zhao2025proof}.
The Gao-Wald Theorem implies that for a boundary causal domain $V = \hat{J}^-[p] \cap \hat{J}^+[q]$, the bulk causal wedge is $J^+[p] \cap J^-[q]$. Taking $c_1$ as an example, the null sheet $\partial J^+[c_1]$, which contains the causal surface of $V_1$, is anchored at $\partial \hat{J}^+[c_1]$.

Theorems in ref. \cite{headrick2014causality} generalize the Gao-Wald Theorem to homology regions:
\begin{align}
        \ew(V)\cap \bb &= \hat{D}(V) \label{eq:ew_bb}, \\
    J^{\pm}[RT(V)]\cap \bb &= \hat{J}^{\pm}[\partial V] \label{eq:RT_bb}.
\end{align}
Specifically, null sheets emanating from HRRT surfaces of a causal domain $V$ are anchored at $\hat{J}^\pm[\partial V]$ on $\partial M$. The same is true for null sheets emanating from causal surfaces, due to the causal wedge-entanglement wedge inclusion relation.

In asymptotically global AdS spacetimes, matter/curvature distorts bulk null sheets $\mathcal{N}$ relative to their pure AdS counterparts $\mathcal{N}'$, but their boundary restrictions $\mathcal{N}\cap \partial M$ coincide by \eqref{eq:ew_bb} and \eqref{eq:RT_bb}. Our proof strategy therefore uses boundary null rays from relevant points to constrain the bulk geometry of entanglement wedges and causal wedges.

\subsection{Intersections among wedge horizons}\label{sec:ridge_relation}
Our main proofs rely extensively on geometric relations among null sheets emanating from HRRT surfaces. We therefore summarize some key observations here.

\begin{lemma}\label{lemma:2_null_ridge}
Let $c_1,c_2$ be two distinct points on a boundary Cauchy slice of the timelike boundary $\bb$. 
Then the intersection of their boundary causal futures consists of two points,
\[
\hat{J}^+[c_1]\cap \hat{J}^+[c_2] = \{p,q\}.
\]
Consider two bulk \emph{causal boundaries} $\NN_1$ and $\NN_2$ satisfying
\[
\NN_1\cap \bb=\hat{J}^+[c_1], \qquad \NN_2\cap \bb=\hat{J}^+[c_2].
\]
Then the intersection
\[
\RR := \NN_1 \cap \NN_2
\]
is a connected, continuous, everywhere spacelike, simple (non-self-intersecting) curve with endpoints
$p$ and $q$ on $\bb$.  Moreover, $\RR$ lies entirely in the bulk except at its endpoints.
We refer to such an intersection curve as a \emph{ridge}.

Equivalently, the union $\NN_1\cup\NN_2$ partitions the spacetime into four connected regions.
\end{lemma}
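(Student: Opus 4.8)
The plan is to split the statement into a local differential-geometric claim (that $\RR$ is a smooth spacelike $1$-manifold) and a global topological claim (that it is a single simple arc with its two endpoints on $\bb$), using the causal anchoring principle of Section~\ref{sec:Causal_anchoring} to fix all boundary data. First I would pin down the endpoints. On the cylinder $\bb\cong\Sphere^1\times\R$ the boundary restriction $\NN_i\cap\bb$ is, by causal anchoring, exactly the pair of future-directed null rays issuing from $c_i$ (a left-mover and a right-mover at $45^\circ$ in the flat conformal frame). The right-mover of $c_1$ is parallel to the right-mover of $c_2$, and likewise for the left-movers, so only the two opposite pairs can cross; this yields exactly the two transverse boundary intersection points $p,q$, and since $\RR\cap\bb=(\NN_1\cap\bb)\cap(\NN_2\cap\bb)$ these are forced to be the endpoints $p_1,p_2$ of $\RR$.

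For the local structure I would argue pointwise. Each $\NN_i$ is an achronal null hypersurface with null generator $k_i$, so its tangent space is $k_i^{\perp}$. Where the sheets meet transversally the generators $k_1,k_2$ are independent null vectors, $\mathrm{span}(k_1,k_2)$ is a timelike $2$-plane, and hence $T\RR=k_1^{\perp}\cap k_2^{\perp}=\mathrm{span}(k_1,k_2)^{\perp}$ is $1$-dimensional and positive-definite. In the $3$-dimensional bulk this makes $\RR$ an embedded spacelike $1$-manifold, which is automatically non-self-intersecting (simple); once connectedness is established, having its two endpoints on $\bb$ forces it to be a single arc. The point requiring care here is transversality: I would show $k_1\neq k_2$ along $\RR$ because the two sheets are anchored at distinct boundary rays, and I would restrict attention (via Assumption~\ref{assumption:1}.4) to the singularity-free slab between a slice just after $\{c_1,c_2\}$ and one just before the sheets refocus, so that no caustics of either sheet are encountered.

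Global connectedness is the crux, and I would handle it by slicing. Foliating the slab by bulk Cauchy slices $\Sigma_\tau$ (topological disks), each wavefront $\NN_i\cap\Sigma_\tau$ is an arc cutting the disk in two, anchored on $\partial\Sigma_\tau$ at the instantaneous positions of the two boundary null rays of $c_i$. The two wavefronts cross if and only if their four boundary anchors are linked on the circle $\partial\Sigma_\tau$, and the linking toggles precisely at the two times $\tau_p,\tau_q$ at which the boundary rays cross at $p$ and $q$. Thus $\RR\cap\Sigma_\tau$ is nonempty exactly for $\tau$ between these times, identifying $\RR$ as a connected curve running from $p$ to $q$ and lying in the open bulk away from its endpoints. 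To upgrade the mod-$2$ count "at least one crossing" to "exactly one" — so that $\RR$ is a single arc rather than a union of arcs — I would invoke the null curvature condition: focusing forces the generators of each sheet to converge, so each bulk wavefront lies no further out than its pure-AdS counterpart $\NN_i'$, whose intersection can be computed explicitly and shown to be a single spacelike arc, and I would argue the crossing count cannot increase under this inward focusing. This last step, controlling the number of wavefront intersections uniformly in $\tau$, is the main obstacle; everything else is either a fixed boundary computation or a pointwise tangent-space calculation. Finally, the four-part decomposition is immediate once $\RR$ is a single transverse seam: $\NN_1$ separates $M$ into $J^+[c_1]$ and its complement, $\NN_2$ does the same for $c_2$, and the two partitions are genuinely independent on either side of $\RR$, yielding the four nonempty regions $J^+[c_1]\cap J^+[c_2]$, $J^+[c_1]\setminus J^+[c_2]$, $J^+[c_2]\setminus J^+[c_1]$, and $M\setminus(J^+[c_1]\cup J^+[c_2])$.
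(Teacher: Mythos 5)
Your proposal handles the easy parts correctly (the two boundary crossing points $p,q$, and the spacelike character of $\RR$ at transverse smooth points), but it has a genuine gap at exactly the step that constitutes the content of the lemma, and you flag it yourself: upgrading the parity count ``an odd number of wavefront crossings on each slice $\Sigma_\tau$'' to ``exactly one crossing.'' Without that step you cannot exclude that $\RR$ is an arc from $p$ to $q$ together with extra closed loops or nested arcs, which is precisely the configuration the lemma asserts cannot occur --- and the paper's own remark warns this is not a technicality, since for general causal boundaries in Minkowski space the four-part separation \emph{does} fail. The tool you propose for closing the gap --- the null curvature condition, focusing, and comparison of each sheet with its pure-AdS counterpart --- cannot do the job: focusing constrains the expansion $\theta$ of each congruence separately (a monotonicity statement along each sheet individually), but places no constraint on the relative position of two \emph{different} wavefronts; one can bend one wavefront, consistently with $\theta\le 0$ on both sheets, so that it crosses the other an arbitrary odd number of times on some slice. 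Tellingly, the paper's proof of this lemma never invokes the null curvature condition at all.

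What the paper uses instead is a global causal--topological fact that your slicing argument never exploits: $\NN_2$ is the boundary of a future set, hence an achronal $C^0$ hypersurface that separates all of $\overline{M}$ into its future and past sides; it therefore separates $\NN_1$ into two open pieces containing respectively $\gamma_1$ (the part of $\NN_1\cap\bb$ running through the antipodal point $\alpha_1$, which lies to the future of $\NN_2$) and $\gamma_2$ (the part through $c_1$, lying to the past). The ridge $\RR$ is then the common frontier of these two pieces on $\NN_1$ and is homotopic on $\NN_1$ to each $\gamma_i$, i.e.\ a single simple arc from $p$ to $q$. The same future-set property supplies the pointwise statement your foliation argument is missing: a future-directed null generator of $\NN_1$, once it enters $I^+$ of the set generating $\NN_2$, can never return to $\NN_2$, so each generator of $\NN_1$ meets $\NN_2$ at most once and $\RR$ is a graph over a connected arc of the circle of generators. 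If you want to keep your Cauchy-slice foliation, that ``cross at most once'' statement is the correct replacement for your focusing step; as written, your argument proves only that $\RR\cap\Sigma_\tau\neq\emptyset$ for intermediate $\tau$, which is strictly weaker than the lemma. A secondary, repairable issue: your tangent-space computation presumes smooth sheets with well-defined generators, but causal boundaries are only Lipschitz/$C^0$ and null sheets from HRRT surfaces generically develop caustics and self-intersections; Assumption \ref{assumption:1}.4 concerns spacetime singularities, not caustics of null congruences, so it does not license the smoothness you assume. The paper sidesteps this by phrasing the whole argument in terms of $C^0$ achronal hypersurfaces.
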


\begin{figure}
    \centering
    \includegraphics[width=0.3\linewidth,trim={10 5 10 5},clip]{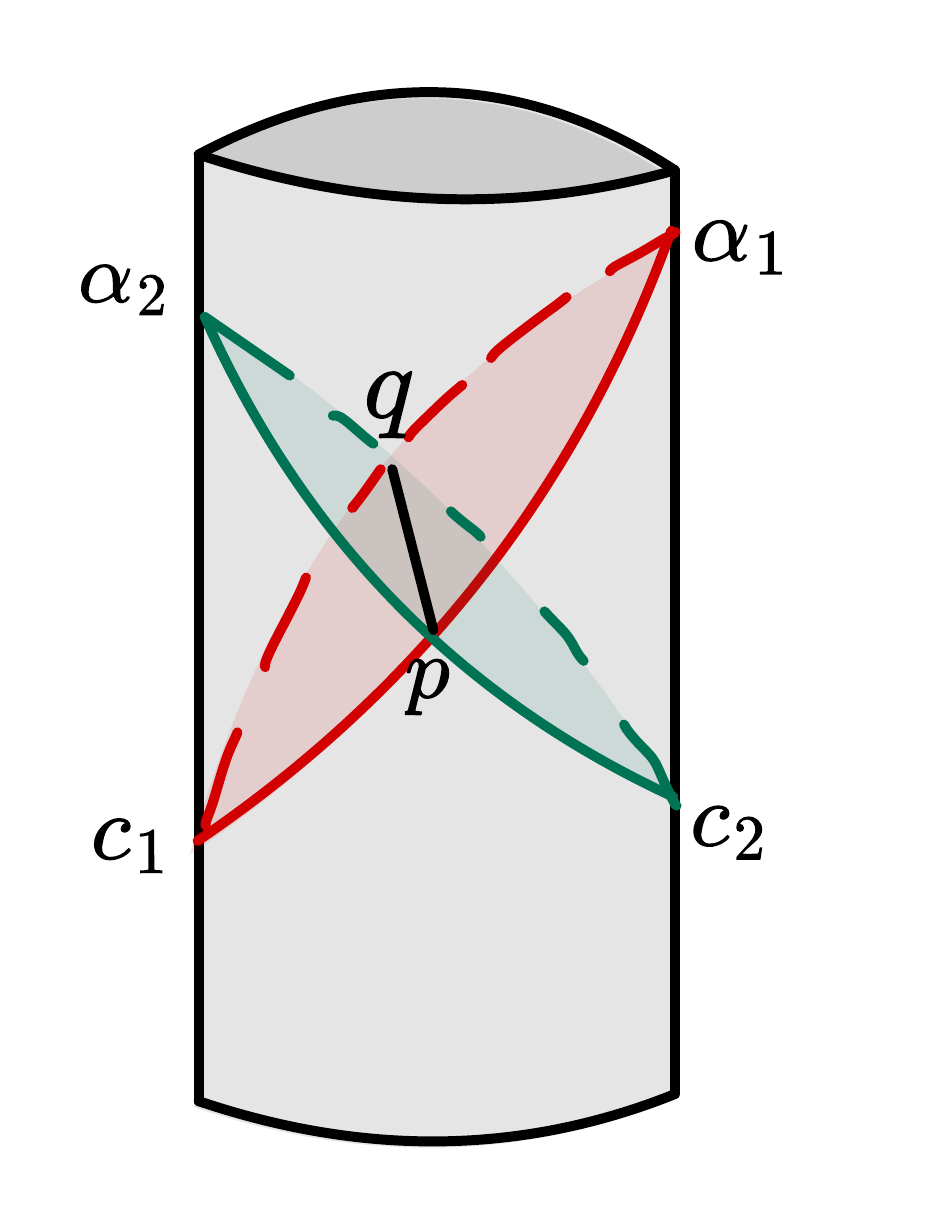}
    \caption{Illustration of two relevant null sheets intersecting at a simple ridge.}
    \label{fig:2Nridge}
\end{figure}

\begin{remark}
Lemma~\ref{lemma:2_null_ridge} obviously applies to
$\RR=\partial J^+[c_1]\cap \partial J^+[c_2]$.
In the following, we will apply the lemma to intersections of null sheets emanating from
HRRT surfaces that are anchored at $\partial \hat{J}^+[c_i]$, for example
$\RR_{V_1,V_2}=\partial J^+[RT(V_1)] \cap \partial J^+[RT(V_2)]$.

That two causal boundaries partition the full spacetime into four regions does not hold in
complete generality; counterexamples can be constructed in Minkowski space using compact
sets with non-convex boundaries.  The assumptions of the lemma exclude such cases.
\end{remark}

\begin{proof}
By assumption,
$\NN_i\cap \bb=\partial \hat{J}^+[c_i]$, so the two boundary intersection points
$p,q\in\hat{J}^+[c_1]\cap\hat{J}^+[c_2]$ belong to $\RR$. 

Let $\alpha_1$ be the future antipodal point of $c_1$, so that
$\partial \hat{J}^+[c_1]=\partial \hat{J}^-[\alpha_1]$ (see Figure \ref{fig:2Nridge} for an illustration).
This boundary null curve decomposes into two arcs:
one arc $\gamma_1$ connecting $p$ to $\alpha_1$ to $q$,
and the other arc $\gamma_2$ connecting $p$ to $c_1$ to $q$.
Both arcs lie on $\NN_1$.
By construction, $\gamma_1$ lies strictly to the future of $\NN_2$,
while $\gamma_2$ lies strictly to the past of $\NN_2$.

As a standard property of causal boundaries,
$\NN_1$ is ruled by null geodesic generators, and $\NN_2$ is achronal.
Consequently, each null generator of $\NN_1$ can intersect $\NN_2$ at most once:
if a generator intersected $\NN_2$ at two distinct points,
then the segment between them would lie entirely on one side of $\NN_2$,
allowing the construction of a timelike curve between two distinct points on $\NN_2$,
contradicting the achronality of $\NN_2$.

It follows that $\NN_2$ separates $\NN_1$ into exactly two connected components:
the portion to the future of $\NN_2$ and the portion to the past of $\NN_2$.
The arcs $\gamma_1$ and $\gamma_2$ lie in distinct components of
$\NN_1\setminus(\NN_1\cap\NN_2)$.

Therefore, the intersection $\RR=\NN_1\cap\NN_2$ must separate $\gamma_1$ from $\gamma_2$
within $\NN_1$, and hence must contain at least one connected component joining $p$ to $q$.
If $\RR$ contained more than one such component, or contained a closed-loop component in the
bulk, then some null generator of $\NN_1$ would necessarily intersect $\NN_2$ more than once,
contradicting the single-intersection property established above.

Thus $\RR$ consists of a single connected $p$--$q$ curve, which is necessarily spacelike,
simple, and entirely bulk-supported except at its endpoints.
\end{proof}

\begin{corollary}\label{lemma:3_ridge}
Let $c_1,c_2$ and $\beta$ be three distinct points on a boundary Cauchy slice of the timelike boundary $\bb$. Then the intersection of their boundary causal futures are pairwise nonemtpy.
    Consider three bulk causal boundaries $\NN_1$, $\NN_2$ and $\NN_3$ satisfying
    $$\NN_1\cap \bb=\hat{J}^+[c_1], \quad \NN_2\cap \bb=\hat{J}^+[c_2], \quad \NN_3\cap \bb=\hat{J}^+[\beta].$$
    Then the three null sheets intersect at a single point
    $$O=\RR_{\NN_1,\NN_2}\cap \RR_{\NN_1,\NN_3} = \RR_{\NN_1,\NN_2}\cap \RR_{\NN_2,\NN_3} =\RR_{\NN_1,\NN_3}\cap \RR_{\NN_2,\NN_3}  =\NN_1 \cap \NN_2 \cap \NN_3,$$
    where $\RR$ with subscripts denote the ridge of intersection between two relevant null sheets.
\end{corollary}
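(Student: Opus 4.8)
The plan is to reduce the triple statement to a single claim about two spacelike ridges lying on a common null sheet, and then to split that claim into an existence part (topological separation) and a uniqueness part (monotonicity). First I would invoke Lemma \ref{lemma:2_null_ridge} three times, once per pair, to produce the ridges $\RR_{\NN_1,\NN_2}$, $\RR_{\NN_1,\NN_3}$, $\RR_{\NN_2,\NN_3}$, each a continuous, simple, spacelike curve whose interior lies in the bulk and whose two endpoints lie on $\bb$. The chain of equalities in the statement is then purely set-theoretic: since $\RR_{\NN_1,\NN_2}\cap\RR_{\NN_1,\NN_3}=\NN_1\cap\NN_2\cap\NN_3=\RR_{\NN_1,\NN_2}\cap\RR_{\NN_2,\NN_3}=\RR_{\NN_1,\NN_3}\cap\RR_{\NN_2,\NN_3}$, it suffices to prove that $\NN_1\cap\NN_2\cap\NN_3$ is a single point. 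Because both $\RR_{\NN_1,\NN_2}$ and $\RR_{\NN_1,\NN_3}$ lie inside $\NN_1$, this is equivalent to showing that these two ridges on $\NN_1$ meet in exactly one point.

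For existence I would use, as in the proof of Lemma \ref{lemma:2_null_ridge}, that $\NN_3$ is a codimension-one $C^0$ hypersurface separating $\overline{M}$ into its causal future and its causal past. It then suffices that the two endpoints of $\RR_{\NN_1,\NN_2}$ lie on opposite sides of $\NN_3$, for then continuity of the connected curve $\RR_{\NN_1,\NN_2}$ forces a crossing. Those endpoints are exactly the points $\partial\hat{J}^+[c_1]\cap\partial\hat{J}^+[c_2]$, and by the causal anchoring principle of Section \ref{sec:Causal_anchoring} their causal relation to $\NN_3$ is read off from the boundary, i.e. from whether they lie in $\hat{J}^+[\beta]$. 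Working in the flat conformal metric on $\bb\cong\Sphere^1\times\R$, the two endpoints are the ``near'' meeting of the boundary light cones of $c_1,c_2$ (at the spatial midpoint $m$, boundary time $\tfrac12 d(c_1,c_2)$) and the ``far'' meeting (at the antipode of $m$, time $\tfrac12(L-d(c_1,c_2))$, with $L$ the spatial circumference). Using the antipodal identity $d(m+\tfrac{L}{2},x_\beta)=\tfrac{L}{2}-d(m,x_\beta)$ on the circle, the near point lies in $\hat{J}^+[\beta]$ iff $d(m,x_\beta)\le\tfrac12 d(c_1,c_2)$, while the far point lies in $\hat{J}^+[\beta]$ iff $d(m,x_\beta)\ge\tfrac12 d(c_1,c_2)$. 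Away from the non-generic null-aligned case exactly one inequality is strict, so the endpoints straddle $\NN_3$ and a crossing point $O$ exists.

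For uniqueness — which I expect to be the main obstacle — I would show that $\RR_{\NN_1,\NN_2}$ crosses $\NN_3$ monotonically, so the crossing cannot recur. Restricting to $\NN_2$, Lemma \ref{lemma:2_null_ridge} applied to the pair $\NN_2,\NN_3$ shows that $\RR_{\NN_2,\NN_3}$ bipartitions $\NN_2$ into its portion to the future of $\NN_3$ and its portion to the past, and $\RR_{\NN_1,\NN_2}\subset\NN_2$ is a curve with one endpoint in each piece. Because $\RR_{\NN_1,\NN_2}$ is achronal and $\NN_3$ is a null causal boundary ruled by null generators, the curve meets each generator of $\NN_3$ at most once; combined with the clean bipartition this excludes a return crossing and pins the intersection to the single point $O$. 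Since $O\in\NN_1\cap\NN_2\cap\NN_3$ it then lies on all three ridges, establishing the full chain of equalities.

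The delicate step is uniqueness: existence follows cleanly from separation together with the boundary light-cone combinatorics, but excluding multiple bulk crossings requires controlling the shape of the matter-distorted null sheets. Here the causal anchoring principle is essential — matter and curvature bend $\NN_1,\NN_2,\NN_3$ in the interior but fix their boundary anchoring and preserve the achronal-versus-null transversality used above, so the monotone-crossing count is insensitive to the distortion and reproduces the pure-AdS picture in which three light cones meet at the single point lying above the circumcenter of $c_1,c_2,\beta$.
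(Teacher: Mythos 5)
Your reduction of the statement to ``$\NN_1\cap\NN_2\cap\NN_3$ is a single point,'' and your existence argument, are sound. Indeed your existence step is more explicit than the paper's own: the paper asserts nonemptiness ``by transversality,'' whereas you correctly locate the real reason, namely that the two boundary endpoints of $\RR_{\NN_1,\NN_2}$ lie on opposite sides of the separating hypersurface $\NN_3$, which you verify from the boundary light-cone combinatorics via the causal anchoring principle. One improvement: the degenerate case you set aside as ``non-generic'' cannot occur at all. If both points of $\partial\hat{J}^+[c_1]\cap\partial\hat{J}^+[c_2]$ lay on $\partial\hat{J}^+[\beta]$, then $\beta$ would lie on the past light cones of both of these points, and the intersection of those two past cones is exactly $\{c_1,c_2\}$; this contradicts the assumed distinctness of the three points, so no genericity assumption is needed.

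The uniqueness step, however, contains a genuine gap, and it is exactly the step you flag as delicate. Your two ingredients are (i) the bipartition of $\NN_2$ by $\RR_{\NN_2,\NN_3}$ coming from Lemma \ref{lemma:2_null_ridge}, and (ii) the claim that the achronal curve $\RR_{\NN_1,\NN_2}$ meets each null generator of $\NN_3$ at most once. Ingredient (i) yields only parity: a curve whose endpoints lie in the two different pieces must cross $\RR_{\NN_2,\NN_3}$ an odd number of times, but three or five transverse crossings are perfectly compatible with a clean bipartition. Ingredient (ii) does not repair this, because distinct intersection points of $\RR_{\NN_1,\NN_2}$ with $\NN_3$ will in general lie on \emph{different} null generators of $\NN_3$; bounding intersections per generator does not bound the total number of intersection points. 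Hence ``combined with the clean bipartition this excludes a return crossing'' does not follow: nothing in your argument forbids the ridge from entering the past of $\NN_3$, returning to its future, and exiting again, meeting three different generators along the way. The paper excludes precisely this scenario by a different mechanism, the global ``four parts'' clause of Lemma \ref{lemma:2_null_ridge}: if $\RR_{\NN_1,\NN_2}\cap\NN_3$ contained more than one point, the ridge would enter and leave the future (or past) of $\NN_3$, and since that ridge lies inside both $\NN_1$ and $\NN_2$, this would force those sheets to enter and leave the future/past of $\NN_3$, contradicting the statement that $\NN_1\cup\NN_3$ (equivalently $\NN_2\cup\NN_3$) cuts the spacetime into exactly four parts. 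That global separation property---not achronality or per-generator counting---is what kills the extra crossings, so this is the ingredient you would need to import (or reprove) to complete your argument.
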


\begin{proof}
By Lemma~\ref{lemma:2_null_ridge}, each pair $\NN_i,\NN_j$ intersects along a unique ridge
\[
\RR_{\NN_i,\NN_j}:=\NN_i\cap \NN_j,
\]
which is a continuous, simple, spacelike curve with endpoints on $\bb$. In particular,
$\RR_{\NN_1,\NN_2}$ is nonempty.

\medskip
\noindent\textbf{Step 0: $\RR_{\NN_1,\NN_2}\cap \NN_3\neq\emptyset$.}
This follows from the boundary ordering: the endpoints of $\RR_{\NN_1,\NN_2}$ on $\bb$ lie in
different components of $\bb\setminus \hat J^+[\beta]$, so the connected curve
$\RR_{\NN_1,\NN_2}$ must cross $\NN_3$.

\medskip
\noindent\textbf{Step 1: $\RR_{\NN_1,\NN_2}\cap \NN_3$ cannot contain a curve segment.}
Suppose for contradiction that $\RR_{\NN_1,\NN_2}\cap \NN_3$ contains a nontrivial segment
$I$. Then $I\subset \NN_1\cap \NN_2\cap \NN_3$.
Assume there exists a point $x\in I$ at which all three hypersurfaces $\NN_1,\NN_2,\NN_3$ admit tangent planes and hence have well-defined null generator directions \footnote{Note that such a point $x$ necessarily exists: the ridge $\mathcal{R}_{\mathcal{N}_1,\mathcal{N}_2}$ is a spacelike curve whose points each lie on a unique null generator of $\mathcal{N}_1$ and $\mathcal{N}_2$; consequently no caustic forms along the ridge, and almost every point of $I$ has well‑defined tangent planes for all three null sheets.  We may therefore pick $x\in I$ where all three are smooth.}.

Because $I$ is spacelike (as a subcurve of the ridge $\RR_{\NN_1,\NN_2}$), its normal plane
$N_x I$ has Lorentzian signature $(-1,1)$ and therefore contains exactly two null directions.
For a differentiable null hypersurface $\NN_i$ containing $I$, its null generator at $x$
must be orthogonal to $T_x I$, hence must lie in $N_x I$ and coincide with one of these two
null directions. Therefore, among the three null sheets $\NN_1,\NN_2,\NN_3$ through $x$,
at least two have the same null generator direction at $x$.

Since those two hypersurfaces both contain the same spacelike segment $I$ through $x$ and
share the same null generator direction at $x$, they coincide in a neighborhood of $x$ as
ruled null hypersurfaces. This contradicts the distinct boundary anchoring data
$\NN_1\cap\bb=\hat J^+[c_1]$, $\NN_2\cap\bb=\hat J^+[c_2]$, $\NN_3\cap\bb=\hat J^+[\beta]$
for distinct boundary points. Hence $\RR_{\NN_1,\NN_2}\cap \NN_3$ contains no nontrivial segment.

\medskip
\noindent\textbf{Step 2: $\RR_{\NN_1,\NN_2}\cap \NN_3$ contains at most one point.}
By Step~1, $\RR_{\NN_1,\NN_2}\cap \NN_3$ is a discrete set of points.
Assume for contradiction that it contains two distinct points $O_1\neq O_2$.
Since \(\NN_3\) is an achronal causal boundary, it locally separates $\overline{M}$ into a future side and a past side.
If \(\RR_{\NN_1,\NN_2}\cap \NN_3\) contains two distinct points $O_1\neq O_2$, then along the connected spacelike curve \(\RR_{\NN_1,\NN_2}\) there exists an open segment whose interior lies entirely on one side of \(\NN_3\). This segment determines a “hole” between \(\RR_{\NN_1,\NN_2}\) and \(\NN_3\).

The ridge \(\RR_{\NN_1,\NN_2}\) is contained in both null hypersurfaces \(\NN_1\) and \(\NN_2\). At most one of these null hypersurfaces can contain the hole entirely. Since \(\NN_1\) and \(\NN_2\) are distinct causal boundaries, the other null hypersurface must intersect \(\NN_3\) on both sides of the hole, and hence must enter and leave one side of \(\NN_3\).

This contradicts Lemma~\ref{lemma:2_null_ridge} applied to the corresponding pair
$(\NN_1,\NN_3)$ or $(\NN_2,\NN_3)$, which guarantees that two such
causal boundaries separate the spacetime into four connected components.
Hence $\RR_{\NN_1,\NN_2}\cap \NN_3$ contains at most one point.

\medskip
\noindent\textbf{Step 3: Identification of the triple intersection point.}
Let $O$ denote the unique point in $\RR_{\NN_1,\NN_2}\cap \NN_3$.
Then $O\in \NN_1\cap\NN_2\cap\NN_3$, hence $O$ lies on each pairwise ridge. Therefore
\[
O=\RR_{\NN_1,\NN_2}\cap \RR_{\NN_1,\NN_3}
=\RR_{\NN_1,\NN_2}\cap \RR_{\NN_2,\NN_3}
=\RR_{\NN_1,\NN_3}\cap \RR_{\NN_2,\NN_3}
=\NN_1\cap\NN_2\cap\NN_3,
\]
as claimed.
\end{proof}

\begin{definition}[Geometric ordering of ridges]\label{def:ridge_order}
Let $c_1,c_2$ and $\beta_1,\beta_2$ be four distinct points on a boundary Cauchy slice of the
timelike boundary $\bb$.
Let $\NN_1=\partial J^+[\UU_1]$ and $\NN_2=\partial J^+[\UU_2]$ be two bulk future causal
boundaries satisfying
\[
\NN_1\cap \bb=\hat{J}^+[c_1], \qquad \NN_2\cap \bb=\hat{J}^+[c_2],
\]
and let $\NN_3=\partial J^-[\UU_3]$ and $\NN_4=\partial J^-[\UU_4]$ be two bulk past causal
boundaries satisfying
\[
\NN_3\cap \bb=\hat{J}^+[\beta_1], \qquad \NN_4\cap \bb=\hat{J}^+[\beta_2].
\]
Denote the ridges by
\[
\RR_{\UU_1,\UU_2}:=\NN_1\cap \NN_2, \qquad
\RR_{\UU_3,\UU_4}:=\NN_3\cap \NN_4 .
\]

We say that the ridge $\RR_{\UU_1,\UU_2}$ \emph{lies below} $\RR_{\UU_3,\UU_4}$ if
\[
J^+[\UU_1]\cap J^+[\UU_2]\cap J^-[\UU_3]\cap J^-[\UU_4]\neq \emptyset.
\]
Otherwise, we say that $\RR_{\UU_1,\UU_2}$ \emph{lies above} $\RR_{\UU_3,\UU_4}$, i.e.
\[
J^+[\UU_1]\cap J^+[\UU_2]\cap J^-[\UU_3]\cap J^-[\UU_4]= \emptyset.
\]
\end{definition}

\begin{remark}\label{rmk:ridge_order_geometry}
Definition~\ref{def:ridge_order} is a convenient shorthand for the causal relation between the
two ridges.  In the present framework, Lemma~\ref{lemma:2_null_ridge} implies that each ridge
is a unique simple curve with fixed boundary endpoints, and Corollary~\ref{lemma:3_ridge}
controls triple intersections of causal boundaries.
Together, these results exclude pathological configurations discussed in our previous
work~\cite{zhao2025proof}, such as:
\begin{itemize}
    \item the two ridges intertwining (e.g.\ one ``spiraling'' around the other);
    \item the two ridges intersecting at more than one point.
\end{itemize}
Indeed, any such configuration would force repeated entering/leaving behavior among causal
boundaries, contradicting Lemma~\ref{lemma:2_null_ridge}.
\end{remark}


\subsection{Characterization of connected entanglement wedges}\label{sec:characterize_connect}
Lastly, we recall some basic facts about a multipartite entanglement wedge being connected or multipartite mutual information being nonzero. We assume familiarity with these concepts as presented in Refs. \cite{witten2020introQI,wittenBHT}.

\begin{lemma}\label{lemma:connected_characterization}
Consider a union of disjoint subsets $V_1\cup \cdots \cup V_n$ and its causal complement $\Xall=(\Vall)^c$.
    The following are equivalent:
    \begin{enumerate}
        \item $\ew(V_1\cup \cdots \cup V_n)$ is connected.
        \item For any nontrivial (nonempty) bipartition of $V_1\cup \cdots \cup V_n=A\cup B$, the mutual information $I(A:B)>0$.
        \item $\ewX$ is fully disconnected \footnote{Since there are partially connected cases when $n>2$, we use fully disconnected to refer to the case of $\ewX=\ew(X_1)\cup \cdots \cup \ew(X_n)$.}.
    \end{enumerate}
\end{lemma}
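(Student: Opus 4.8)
The plan is to prove the equivalence of the three statements by establishing a cycle of implications, say $(1)\Rightarrow(2)\Rightarrow(3)\Rightarrow(1)$, or by pairing each with standard facts from the holographic dictionary. The underlying engine is the HRRT formula together with the relation $\ew(V)^c = \ew(V^c)$ for a pure global state (Assumption \ref{assumption:1}, item 5), which ties the connectivity of the input wedge $\ewV$ to the structure of the complementary wedge $\ewX$. The key quantitative input is that mutual information is computed by areas of HRRT surfaces, so being $O(1/G_N)$ versus $O(1)$ is controlled by \emph{which} candidate surface is minimal.

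For $(1)\Leftrightarrow(2)$, I would argue that connectivity of $\ew(A\cup B)$ is precisely the statement that the minimal HRRT surface for $A\cup B$ is the ``connected'' configuration rather than the disconnected union $\text{RT}(A)\cup\text{RT}(B)$. When the connected configuration wins, the entropy $S(A\cup B)$ falls strictly below $S(A)+S(B)$ by an $O(1/G_N)$ amount, so $I(A:B)=S(A)+S(B)-S(A\cup B)=O(1/G_N)$; conversely, leading-order mutual information forces the connected phase. The subtlety for $n>2$ is that ``$\ewV$ is connected'' must be read as: the wedge is a single connected bulk region, which is equivalent to saying that for \emph{every} bipartition $A\cup B$ the pieces are linked at $O(1/G_N)$ — a disconnection along any single cut would produce a bipartition with vanishing (leading-order) mutual information. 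So I would show connectedness fails $\iff$ some bipartition has $I(A:B)=o(1/G_N)$, which is the contrapositive of $(1)\Leftrightarrow(2)$.

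For $(1)\Leftrightarrow(3)$, the complementary-wedge relation is central. Since the global state is pure, $\text{RT}(\Vall)=\text{RT}(\Xall)$ and the two wedges partition the bulk Cauchy slice (up to the shared HRRT surface). I would argue that a connected input wedge $\ewV$ is complementary to a wedge $\ewX$ whose homology region is ``carved out'' in such a way that no proper subset of the $X_i$ can be split off without cutting the connected input region — forcing $\ewX$ to be \emph{fully} disconnected into $\ew(X_1)\cup\cdots\cup\ew(X_n)$. Conversely, if $\ewX$ were even partially connected (some $X_i,X_j$ sharing a connected wedge component), the complementary region $\ewV$ would acquire a corresponding disconnection, contradicting $(1)$. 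The footnote's careful distinction between fully disconnected and partially connected is exactly what makes this direction require attention for $n>2$, where intermediate phases exist that were absent at $n=2$.

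The main obstacle I anticipate is the combinatorial bookkeeping of phases when $n>2$: unlike the bipartite case, there are many candidate HRRT surfaces (indexed by which boundary intervals are grouped), and ``connected'' need not mean ``the maximally connected phase.'' I would need to argue carefully that the \emph{single-connected-component} condition on $\ewV$ corresponds precisely to the full disconnection of $\ewX$, ruling out mixed configurations where $\ewV$ is connected yet $\ewX$ retains some connected pair. The cleanest route is probably to reduce every statement to a cut-by-cut criterion: connectedness of $\ewV$ $\iff$ no homologically valid cut separates the $V_i$ into two groups at leading order $\iff$ every such cut is ``paid for'' by a connected-phase HRRT surface, which by complementarity is the same as each $X_i$ owning its own disconnected wedge. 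I would lean on the standard subadditivity/strong-subadditivity inequalities of holographic entropy to justify the phase comparisons, and cite the references \cite{witten2020introQI,wittenBHT} for the underlying quantum-information facts rather than rederiving them.
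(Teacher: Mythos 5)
The paper itself does not prove this lemma: it is stated as a recollection of ``basic facts'' with the reader referred to \cite{witten2020introQI,wittenBHT}, so your proposal has to be judged against the standard argument rather than a proof in the text. Your treatment of $(1)\Leftrightarrow(2)$ is essentially that standard argument and is sound: connectedness of $\ewV$ rules out, for every bipartition $A\cup B$, that the disconnected candidate $\text{RT}(A)\cup \text{RT}(B)$ (a valid homologous competitor, since the wedges of disjoint regions are disjoint) is minimal, giving $S(A\cup B)<S(A)+S(B)$; conversely, a disconnected wedge induces a grouping of the $V_i$ by component, and that bipartition has vanishing leading-order mutual information. The one caveat is that the strict area gap being $O(1/G_N)$ needs a genericity assumption (no degenerate coexistence of minimal surfaces), which the paper only acknowledges in the footnote to Theorem \ref{thm:NMI-wedge}; you should flag it rather than assert the gap is automatically leading order.

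The genuine gap is in $(1)\Leftrightarrow(3)$: you argue only one direction, twice. Your first argument (``no proper subset of the $X_i$ can be split off without cutting the connected input region'') is $(1)\Rightarrow(3)$; your ``conversely'' (``if $\ewX$ were even partially connected, $\ewV$ would acquire a corresponding disconnection'') is $\neg(3)\Rightarrow\neg(1)$, which is the \emph{contrapositive} of the same implication, not its converse. The direction $(3)\Rightarrow(1)$ --- full disconnection of $\ewX$ forces connectedness of $\ewV$ --- is never established, so your cycle of implications does not close. This direction does require an argument, and it is exactly where the paper's $2{+}1$-dimensional setting enters: by purity (Assumption \ref{assumption:1}, item 5) the two wedges partition a bulk Cauchy slice, which is topologically a disk whose boundary circle carries the $2n$ alternating arcs $V_1,X_1,\dots,V_n,X_n$; if $\ewX=\ew(X_1)\cup\cdots\cup\ew(X_n)$, then $\ewV$ is the complement in the disk of $n$ pairwise disjoint caps, each meeting the boundary circle in the single arc $X_i$, and such a complement is connected. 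Equivalently, in contrapositive form: if $\ewV$ had $m\geq 2$ components, the components of the two complementary wedges would form a non-crossing partition of the $2n$ arcs, forcing $\ewX$ to have at most $n-m+1\leq n-1$ components, hence not fully disconnected. Either argument is short, but without one of them, statement (3) is only known to follow from (1), not to imply it. Note finally that both this step and your crossing argument for $(1)\Rightarrow(3)$ tacitly use planarity of the Cauchy slice; in higher-dimensional bulks or slices of nontrivial topology the lemma would need a separate justification, so the dimensional restriction should be stated.
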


    Further, one has 

\begin{lemma}\label{lemma:disconnect_lemma}
    Let $\ewX$ be fully disconnected, i.e.
    $\ewX=\ew(X_1)\cup \cdots \cup \ew(X_n)$, or in terms of entropy 
    $S (\Xall)=S(X_1)+ \cdots + S(X_n)$.
    Then any subset $A \subseteq \Xall$ also has fully disconnected entanglement wedge, i.e.
    $$\ew(A)= \cup_{X_i\in A} \ew(X_i),$$
    or in terms of entropy,
        $$S(A)= \sum_{X_i\in A} S(X_i).$$
\end{lemma}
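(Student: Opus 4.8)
The plan is to reduce the geometric statement to an identity about entropies and then prove that identity by a short sandwich argument built entirely from subadditivity. By the characterization recorded in Lemma \ref{lemma:connected_characterization}, full disconnectedness of $\ew(A)$ for a sub-union $A=\cup_{X_i\in A}X_i$ is equivalent to the additivity $S(A)=\sum_{X_i\in A}S(X_i)$, so it suffices to establish this entropy identity under the hypothesis $S(\Xall)=\sum_{i=1}^{n}S(X_i)$.

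First I would fix a subset $A=\cup_{i\in I}X_i$ and introduce its complementary union $B=\cup_{i\notin I}X_i$, so that $A$ and $B$ are disjoint boundary regions with $A\cup B=\Xall$. Subadditivity of the HRRT entropies (valid in the maximin/covariant setting of Assumption \ref{assumption:1}), applied iteratively to the disjoint pieces, yields the two one-sided bounds $S(A)\le\sum_{i\in I}S(X_i)$ and $S(B)\le\sum_{i\notin I}S(X_i)$, while a single application to the pair $(A,B)$ gives $S(\Xall)=S(A\cup B)\le S(A)+S(B)$.

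Chaining these estimates produces
\begin{equation}
    \sum_{i=1}^{n}S(X_i)=S(\Xall)\le S(A)+S(B)\le \sum_{i\in I}S(X_i)+\sum_{i\notin I}S(X_i)=\sum_{i=1}^{n}S(X_i),
\end{equation}
so every inequality in the chain must saturate. In particular $S(A)=\sum_{i\in I}S(X_i)$ (and symmetrically $S(B)=\sum_{i\notin I}S(X_i)$), which is the desired additivity. Translating back through Lemma \ref{lemma:connected_characterization}, or directly by observing that the disjoint union $\cup_{i\in I}\text{RT}(X_i)$ is a homology representative of $A$ whose area already equals $4G_N\,S(A)$ and is therefore the HRRT surface of $A$ by minimality, yields the wedge decomposition $\ew(A)=\cup_{X_i\in A}\ew(X_i)$.

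The argument is essentially forced once subadditivity is available, so I do not expect a serious obstacle in the inequality chain itself. The only points requiring care are the two standard inputs I am invoking: that subadditivity holds for the covariant HRRT entropies under the maximin construction, and that the entropy equality is genuinely equivalent to the geometric splitting $\ew(A)=\cup_{X_i\in A}\ew(X_i)$ rather than merely to area additivity. The latter equivalence is the more delicate step, but it follows from the minimality argument just sketched: the saturation $S(A)=\sum_{i\in I}S(X_i)$ certifies that the union of individual RT surfaces already achieves the minimal area in the homology class of $A$, so it must coincide with $\text{RT}(A)$ and the entanglement wedge decomposes accordingly.
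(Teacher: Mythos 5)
Your proof is correct and is essentially the paper's own argument rephrased entropically: the subadditivity bound $S(\Xall)\le S(A)+S(B)$ is established by exactly the gluing the paper uses (taking $RT(A)$ together with the surfaces $RT(X_j)$, $X_j\notin A$, as a candidate HRRT surface for $\Xall$), and your saturation step is just the contrapositive of the paper's strict-inequality contradiction. Both versions also share the same implicit genericity assumption at the final step, namely identifying the entropy equality $S(A)=\sum_{X_i\in A}S(X_i)$ with the geometric splitting $\ew(A)=\cup_{X_i\in A}\ew(X_i)$.
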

\begin{proof}
    If $\ew(A)$ is connected or partially connected, then its HRRT surfaces are composed of a union of surfaces with strictly smaller total area than $\cup_{X_i\in A} RT(X_i)$. Combining with $\cup_{X_j\notin A} RT(X_j)$, this gives a candidate HRRT surface for $\Xall = A\cup A^c$ with strictly smaller area than $\cup_i RT(X_i)$. This would contradict the condition that $\ewX$ is fully disconnected.
\end{proof}

\begin{lemma}\label{lemma:connect_lemma}
    Consider a union of disjoint subsets $V_1\cup \cdots V_n$. If the entanglement wedge of any pair is connected, or equivalently, $$I(V_i:V_j)>0, \forall i\neq j$$ 
    then, $\ewV$ is connected.
\end{lemma}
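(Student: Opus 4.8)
The plan is to prove the contrapositive and to translate every geometric statement about wedge connectivity into a statement about mutual information using the equivalences already established in Lemma~\ref{lemma:connected_characterization}. The analytic engine will be nothing more than monotonicity of mutual information, so the argument is short once the bookkeeping in $G_N$ is set up correctly.

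First I would assume that $\ewV$ is \emph{not} connected and invoke the equivalence between items (1) and (2) of Lemma~\ref{lemma:connected_characterization}. Disconnectedness then means there is at least one nontrivial bipartition $\Vall = A \cup B$, with $A$ and $B$ each a nonempty union of some of the $V_k$, across which the mutual information fails to be $O(1/G_N)$; that is, $I(A:B)$ vanishes at leading order in $1/G_N$.

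Next, since the bipartition is nontrivial, both blocks are nonempty, so I may choose indices $i \neq j$ with $V_i \subseteq A$ and $V_j \subseteq B$. The key input is monotonicity of mutual information under enlargement of a subsystem, $I(X:Y) \le I(XZ:Y)$, which follows directly from strong subadditivity (the difference equals $S(XZ)+S(XY)-S(XZY)-S(X)\ge 0$). Applying it twice gives
$$ I(V_i:V_j) \le I(A:V_j) \le I(A:B). $$
Because $I(A:B)$ vanishes at leading order while mutual information is non-negative, the squeeze $0 \le I(V_i:V_j) \le I(A:B)$ forces the leading-order coefficient of $I(V_i:V_j)$ to zero. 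Hence $\ew(V_i\cup V_j)$ is disconnected, contradicting the hypothesis that every pair has connected wedge (equivalently $I(V_i:V_j)=O(1/G_N)>0$). This closes the contrapositive.

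The only delicate point is the $G_N$-accounting: the entropy inequality above is exact, whereas ``connected''/``disconnected'' are leading-order notions, so I must ensure that $I(A:B)$ being subleading genuinely forces $I(V_i:V_j)$ to be subleading at the same order, which the non-negativity squeeze handles cleanly. An alternative route — which I expect to be logically equivalent but heavier — would pass through item (3) of Lemma~\ref{lemma:connected_characterization} together with Lemma~\ref{lemma:disconnect_lemma}, reasoning on the causal complement $\Xall$; I would avoid it, since the direct bipartition argument is shorter and sidesteps having to track the partial-connectivity structure of the complement for $n>2$.
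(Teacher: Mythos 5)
Your proposal is correct and is essentially the paper's own proof: both rest on the bipartition characterization of connectedness from Lemma~\ref{lemma:connected_characterization} together with monotonicity of mutual information (strong subadditivity) to deduce $I(A:B)\geq I(V_i:V_j)>0$ for any nontrivial bipartition $\Vall=A\cup B$ with $V_i\subseteq A$, $V_j\subseteq B$. Your phrasing via the contrapositive, and your explicit care with the $O(1/G_N)$ bookkeeping, are only cosmetic differences from the paper's direct one-line argument.
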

\begin{proof}
By Lemma~\ref{lemma:connected_characterization}, to show that $\ewV$ is connected it suffices to show that $I(A:B)>0$ for every nontrivial bipartition $V_1\cup\cdots\cup V_n=A\cup B$. Then the Lemma follows directly from the monotonicity of mutual information or strong subadditivity: for any $V_i\subseteq A, V_j\subseteq B$ one has $I(A:B)\geq I(V_i:V_j)>0$.
\end{proof}
\begin{remark}
If one only assumes that $\ew(V_1\cup \cdots \cup V_n)$ is connected, the entanglement wedge of any pair could be disconnected. Simple examples can be constructed in pure $AdS_3$.
\end{remark}

In light of the $\GMay$ graph introduced by ref. \cite{may2022nton}, we can introduce a mutual information graph $\GMI$ whose vertices $1,2, \cdots n$ represent the input regions $V_i$'s and whose edge $i-j$ represents $I(V_i:V_j)\sim O(1/G_N)$. The following lemma could make the interpretation of Theorem \ref{thm:n-n-CWT} of ref. \cite{may2022nton} more transparent.
\begin{lemma}\label{lemma:MI_EW}
    If the mutual information graph $\GMI$ is connected,
    then $\ewV$ is connected. 

    A special case is that pairwise mutual information $I(V_i:V_j)>0, \forall i\neq j$.
\end{lemma}
\begin{proof}
    For a set $V_1\cup \cdots \cup V_n$ to have connected entanglement wedges, any nontrivial (nonempty) bipartition of $V_1\cup \cdots \cup V_n=A\cup B$ should have strictly positive mutual information, i.e. 
    $$I(A:B)>0.$$
    
    If $\GMI$ is connected, then for any bipartition $\Vall=A\cup B$, there would exist $V_i\in A$ and $V_j \in B$ such that $I(V_i:V_j)>0$.
    Then applying inductively the monotonicity of mutual information $I(V_1: V_2\cup V_3)\geq I (V_1:V_2)$, one would have
    $$I(A:B)\geq I(V_i:V_j)>0.$$
\end{proof}

In fact, as shown in Ref. \cite{hernandezHubeny2024NMI}, one can give a complete information theoretical characterization of connected multipartite entanglement wedges using multipartite mutual information
\begin{equation}
    I_n(V_1:\cdots:V_n)= \sum_{k=1}^n (-1)^{k-1} \sum_{i_1<\cdots<i_k}S(V_{i_1} \cup \cdots \cup V_{i_k}).
\end{equation}
For example, the monogamy of mutual information (MMI)
\begin{equation}
    S(V_1)+S(V_2)+S(V_3)+S(V_1\cup V_2\cup V_3)\geq S(V_1\cup V_2) +  S(V_1\cup V_3) +  S(V_2\cup V_3)
\end{equation}
which holds for holographic states but not general quantum states can be recast as $$-I_3(V_1\cup V_2\cup V_3) \geq 0.$$
We document here a theorem of ref. \cite{hernandezHubeny2024NMI} for completeness.
\begin{theorem}\label{thm:NMI-wedge}
   The multipartite information of $n$ disjoint subsystems $V_i$ in a generic configuration \footnote{By generic configuration, they exclude phase transition cases when multiple extremal surfaces exchange dominance.} is nonvanishing $I_n(V_1:\cdots:V_n)\neq 0$ if and only if the joint entanglement wedge of $\ewV$ is connected.    
\end{theorem}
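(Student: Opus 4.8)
The plan is to prove the two implications of the biconditional separately, noting at the outset that they have very different characters. For a general quantum state the vanishing of $I_n$ carries no factorization content, so the direction ``$\ewV$ connected $\Rightarrow I_n\neq0$'' cannot be purely information-theoretic and must invoke holography together with the genericity hypothesis; the reverse direction is a clean consequence of inclusion--exclusion. Throughout I write $V_T:=\bigcup_{i\in T}V_i$ for $T\subseteq\{1,\dots,n\}$.

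I would first dispatch the easy implication, $\ewV$ \emph{not} connected $\Rightarrow I_n=0$. In the generic regime a non-connected wedge decomposes into at least two connected components, which induces a nontrivial bipartition $\{1,\dots,n\}=P\sqcup Q$ (both nonempty) across which the HRRT areas are additive at every level:
\begin{equation}\label{eq:additive_split}
S(V_S)=S(V_{S\cap P})+S(V_{S\cap Q})\qquad\text{for all }S\subseteq\{1,\dots,n\}.
\end{equation}
That the top-level separation propagates to all sub-collections is exactly the mechanism behind Lemmas \ref{lemma:connected_characterization} and \ref{lemma:disconnect_lemma}: a geometrically separated cluster contributes its surface additively to every subset. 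Substituting \eqref{eq:additive_split} and splitting each index set as $S=S_P\sqcup S_Q$ with $S_P=S\cap P$, $S_Q=S\cap Q$, the $P$-contribution to $I_n$ collapses to
\begin{equation}
-\left(\sum_{S_P\subseteq P}(-1)^{|S_P|}S(V_{S_P})\right)\left(\sum_{S_Q\subseteq Q}(-1)^{|S_Q|}\right),
\end{equation}
where the $S=\emptyset$ term is added for free since $S(V_\emptyset)=0$. The second factor equals $(1-1)^{|Q|}=0$ as $Q\neq\emptyset$, and the symmetric computation annihilates the $Q$-contribution using $P\neq\emptyset$; hence $I_n=0$.

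For the hard implication I would argue the contrapositive, $I_n=0\Rightarrow\ewV$ not connected, and this is where the bulk enters. By Lemma \ref{lemma:connected_characterization}, connectedness is equivalent to $I(A:B)>0$ for every nontrivial bipartition, i.e. to the failure of \eqref{eq:additive_split} even at the top level; the challenge is to promote ``no additive split'' to ``$I_n\neq0$'', a step that is simply false for generic quantum states. The route I would take is induction on $n$ through the co-information recursion
\begin{equation}
I_n(V_1:\cdots:V_n)=I_{n-1}(V_1:\cdots:V_{n-1})-I_{n-1}(V_1:\cdots:V_{n-1}\mid V_n),
\end{equation}
in which conditioning on $V_n$ replaces each $S(\,\cdot\,)$ by $S(\,\cdot\,\cup V_n)-S(V_n)$, computing every entropy as an HRRT area and exploiting that, within one entanglement phase, these areas depend real-analytically on the boundary region data. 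Genericity (no coincidence of competing extremal areas) then forbids an accidental cancellation of the signed area sum: a zero of $I_n$ on a connected configuration would have to persist on an open neighborhood within the same phase, contradicting an explicit nearby configuration reducible to a manifestly connected two-party case with $I_2=I(V_1:V_2)>0$.

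The main obstacle is precisely this exclusion of accidental cancellation. For $n=2$ one has the sign-definite $I_2=I(V_1:V_2)\ge0$, but for $n\ge3$ the multipartite information is sign-indefinite (already $I_3\le0$ by monogamy of mutual information), so no positivity shortcut is available. A fully rigorous argument must therefore track the topology of the HRRT configuration---recording which bulk homology regions are shared among the subsets $V_S$---and show that connectedness forces at least one extremal surface to survive with a net nonzero coefficient in the alternating sum. Turning this combinatorial-geometric bookkeeping into a cancellation-free statement, rather than the information-theoretic recursion itself, is where the real work lies.
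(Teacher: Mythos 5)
A preliminary remark: the paper does not actually prove Theorem \ref{thm:NMI-wedge} --- it is imported verbatim from ref.~\cite{hernandezHubeny2024NMI} ``for completeness'' --- so there is no in-paper proof to compare yours against, and your proposal must be judged on its own terms. Your easy direction (disconnected $\Rightarrow I_n=0$) is correct and complete: a disconnected $\ewV$ induces a nontrivial bipartition $P\sqcup Q$ with $S(V_P\cup V_Q)=S(V_P)+S(V_Q)$, monotonicity of mutual information (the same mechanism as in Lemmas \ref{lemma:disconnect_lemma} and \ref{lemma:connect_lemma}) propagates this additivity to every sub-collection, and your factorization of the alternating sum against $\sum_{S_Q\subseteq Q}(-1)^{|S_Q|}=(1-1)^{|Q|}=0$ finishes the computation.

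The genuine gap is the hard direction (connected $\Rightarrow I_n\neq 0$), and it is a missing idea rather than a missing detail, as you in effect concede. Two specific failures. First, the co-information recursion $I_n=I_{n-1}-I_{n-1}(\cdots\mid V_n)$ cannot drive an induction: nothing prevents the difference of two nonzero quantities from vanishing, and since $I_n$ is sign-indefinite for $n\ge 3$ there is no positivity to exclude this --- the recursion restates the cancellation problem rather than solving it. Second, the analyticity step is a non sequitur: a zero of a real-analytic function need not ``persist on an open neighborhood''; the zero set of a not-identically-zero real-analytic function merely has measure zero. So the most this line can yield is $I_n\neq 0$ for \emph{almost every} configuration within a phase, and only \emph{after} one shows $I_n$ is not identically zero there. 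Your proposed anchor for that non-vanishing --- deforming to a configuration ``reducible to a manifestly connected two-party case'' --- fails as stated, because merging or shrinking regions does not relate $I_n$ to $I_2$ by any identity: e.g.\ shrinking $V_n$ to a point makes the terms for $S$ and $S\cup\{n\}$ cancel pairwise, so $I_n\to 0$ in that limit. Moreover, even granting the measure-zero conclusion, it proves a weaker statement than the theorem: the footnote defines ``generic'' only as excluding exchange of dominance between competing extremal surfaces, and an accidental zero of the alternating area sum is not such a phase transition. What is needed, and what is absent from your proposal, is an argument that $I_n=0$ in a connected phase forces a non-generic coincidence of areas --- for instance by isolating a term, such as the area of the connected surface $RT(V_1\cup\cdots\cup V_n)$, that enters the alternating sum with nonzero net coefficient and can be varied independently of all proper-subset surfaces because it probes bulk regions none of them reach.
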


\section{Generalizing the n-to-n Connected Wedge Theorem}\label{sec:generalize_n}

\subsection{An improvement of the n-to-n Connected Wedge Theorem}\label{sec:may2022}
Let us start with giving a slightly different proof of the $n$-to-$n$ Connected Wedge Theorem, which also reveals that the necessary condition of Theorem \ref{thm:n-n-CWT} can be further weakened, as stated in Theorem \ref{thm:weaker-necessary}.

\begin{figure}
    \centering
    \includegraphics[width=0.8\linewidth,trim={5 9cm 5 5},clip]{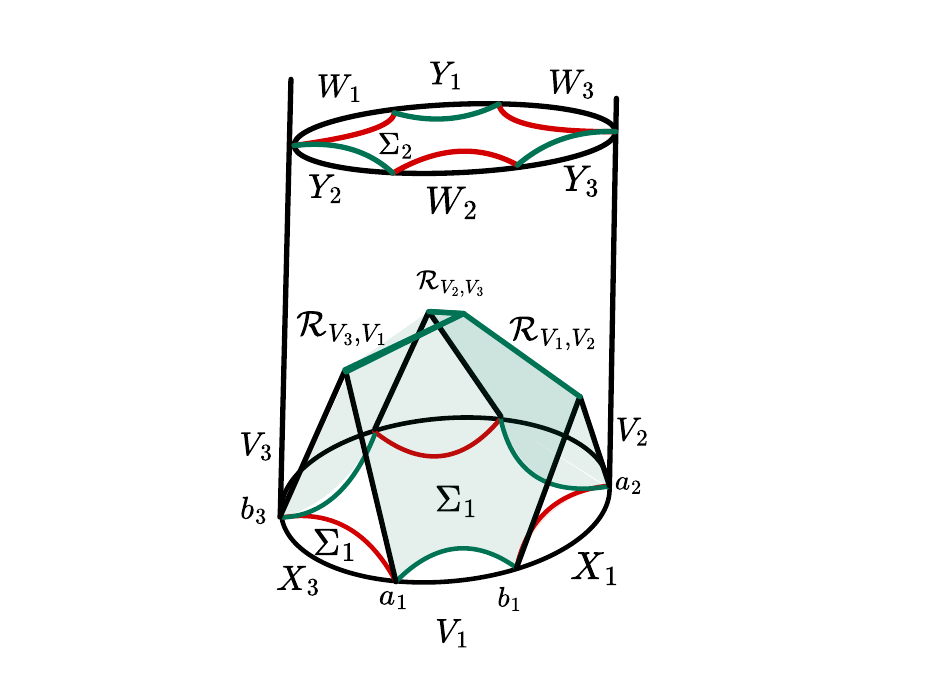}
    \caption{Illustration of the surface $\ZZi$ formed by future null sheets emanating from $RT(V_i)$ for $n=3$. Green curves in $\Sigma_1$ and $\Sigma_2$ lie on $\partial J^+[\ew(V_i)]$ while red curves in $\Sigma_1$ and $\Sigma_2$ lie on $\partial J^-[\ew(W_j)]$. The ridges $\RR_{V_i,V_{i+1}}=\partial J^+[{\ew(V_i)}]\cap \partial J^+[{\ew(V_{i+1})}]$ are labeled explicitly. Note that $n+1=1$ because of the $S^1$ topology.}
    \label{fig:ZZin}
\end{figure}

    For the purpose of explanation, let us pick $\Sigma_1$ to be a bulk Cauchy slice bounded by $\hat{\Sigma}_1$ and containing relevant HRRT surfaces\footnote{This is always possible because HRRT surfaces of disjoint spacelike-separated boundary regions can be minimal on the same Cauchy slice \cite{wall2014maximin}.}. Similarly, let $\Sigma_2$ be a bulk Cauchy slice bounded by $\hat{\Sigma}_2$ that contains relevant HRRT surfaces.
    
We consider the geometric surface $\ZZi:=\partial J^+[\ew(V_1\cup\cdots\cup V_n)]$. In the fully disconnected case, the surface $\ZZi$ is formed by the union of all future-pointing null sheets $\NN_{V_i} = \partial J^+[RT(V_i)]$ that emanate from $RT(V_i)$ \footnote{We note that by $\partial J^+[RT(V_i)]$ etc., we only refer to the future null sheet emanating from HRRT surfaces that points toward the bulk.}, truncated at their mutual intersections (see Figure \ref{fig:ZZin} for an illustration with $n=3$). The surface $\ZZi$ is therefore made up of null sheets intersecting at a net of vertices and (subsets of) ridges.
Noting that $\NN_{V_i}$ is also the future horizon of $\ew(V_i^c)$\footnote{Recall that superscript $c$ indicates causal complements.}, this surface $\ZZi$ is also the future boundary of the compact set $\cap_{i=1}^{n} \, \ew(V_i^c)$.
Also note that $\ZZi$ and $\Sigma_1$ bound a compact subset of $\overline{M}=M \cup \bb$, which will be denoted by $\ZZ_B$.

We also consider past-pointing null sheets $\NN_{W_j} = \partial J^-[RT(W_j)]$ that emanate from $RT(W_j)$'s. Note that $\NN_{W_j}$ is also the past horizon of $\ew(W_j^c)$. By the causal anchoring principle, $\NN_{V_i}\cap \bb$ and $\NN_{W_i}\cap \bb$, are just light rays emitted from $c_i$ and $r_j$ (or $\beta_j$), respectively.

Due to cyclic boundary ordering $V_1, X_1, \cdots V_n, X_n$ (and similarly for the outputs), the null sheet $\NN_{W_i}$, which is anchored at the spatial boundaries $b_i$ and $a_{i+1}$ of $X_i$, is topologically constrained to intersect the surface $\ZZi$. We assert that this intersection $\CC_i$, is a simple curve on $\ZZi$ with endpoints precisely at $b_i$ and $a_{i+1}$.
The argument proceeds in two steps. First, within the bulk region $\ZZ_B$ bounded by $\ZZi$ and the Cauchy slice $\Sigma_1$, the intersection $\CC_i$ is homotopic to $\NN_{W_i} \cap \Sigma_1$, which is itself homotopic to the HRRT surface $RT(X_i)$ within $\Sigma_1$. Second, Lemma \ref{lemma:2_null_ridge} excludes the possibility that $\CC_i$ contain homotopically trivial closed loops: $\NN_{W_j}\cap \NN_{V_i}$ is a simple ridge in $\NN_{V_i}$ and $\ZZi$ is made up of intersecting null sheets $\NN_{V_i}$'s. This concludes that $\CC_i$ is a simple curve without disconnected components. A simple illustration for the $n=3$ case is provided in Figure \ref{fig:focusing_may2022}(a). Moreover, each curve $\CC_i$ must have segments in at least the adjacent null sheets $\NN_{V_i}$ and $\NN_{V_{i+1}}$, though it may also cross other sheets that constitute $\ZZi$. 

\begin{figure}
    \centering
    \includegraphics[width=0.8\linewidth]{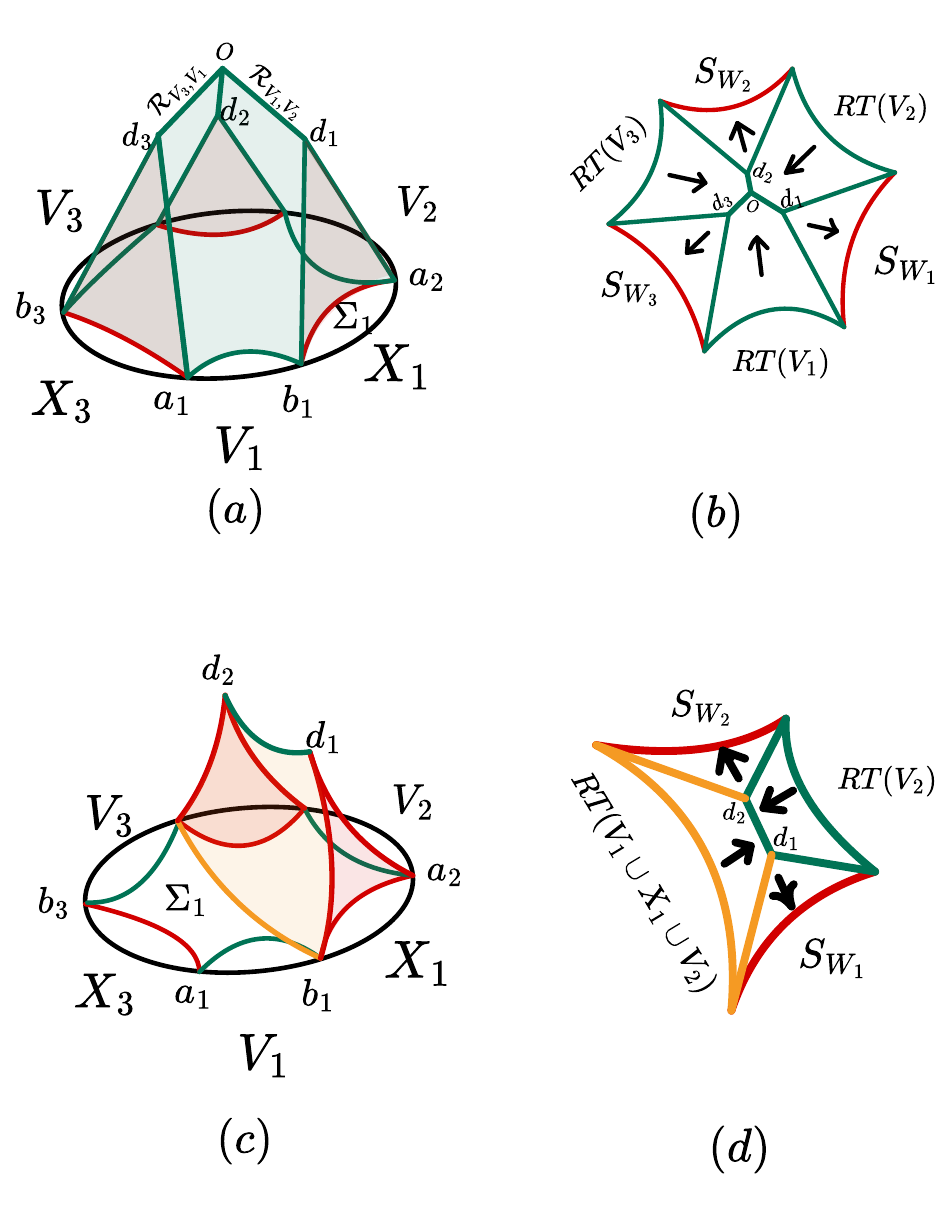}
    \caption{Illustration of focusing calculation when $n=3$. Panel (a) shows the geometric structure $\ZZi$ cut by $\partial J^-[\ew(W_i)]$ (compare with Figure \ref{fig:ZZin}), where future null sheets emanating from $RT(V_i)$ are shown in green and past null sheets emanating from $RT(W_i)$ are shown in red. The curves $\CC_i=\partial J^-[\ew(W_i)]\cap \ZZi$ is shown as $b_i-d_i-a_{i+1}$ (counting modulo $n$). Panel $(b)$ show the $\ZZi$ in a flattened fashion where $S_{W_i}$ is the intersection of null sheet $\partial J^-[\ew(W_i)]$ with $\Sigma_1$, i.e. $S_{W_i}=\partial J^-[\ew(W_i)]\cap \Sigma_1$. Arrows indicate the direction along which null expansion $\theta$ decreases. Panels (c) and (d) are similar to panels $(a)$ and $(b)$ but for partially connected scenarios.}
    \label{fig:focusing_may2022}
\end{figure}

We now prove that if all curves $\CC_i$ are distinct, the entanglement wedge  cannot cannot decompose as a disjoint union $\ew(V_1)\cup \cdots\cup \ew(V_n)$. The proof constructs a specific geometric comparison on the null surface $\ZZi$, following a strategy analogous to the original CWT proof \cite{maypenington2020holographic}.

The key observation is that on $\ZZi$, the collection $\bigcup_{i=1}^n \CC_i$ is homologous to $\bigcup_{i=1}^n \mathrm{RT}(V_i)$.  
Indeed, each null‑sheet component $\mathcal{N}_{V_i}$ of $\ZZi$ is bounded by $\mathrm{RT}(V_i)$ in the past and by segments of the ridges $\mathcal{R}_{V_i,V_{i\pm1}}$ (and possibly others) in the future.  
The curves $\CC_{i-1}$ and $\CC_i$ intersect $\mathcal{N}_{V_i}$ along simple arcs that connect the boudnary $a_i$ or $b_i$ with ridge points; because $\ZZi$ is a topological disk, these arcs divide the boundary of $\mathcal{N}_{V_i}$ into two regions --- one containing $\mathrm{RT}(V_i)$, the other containing the remaining ridge network.
Hence $\mathrm{RT}(V_i)$ is homologous within $\mathcal{N}_{V_i}$ to a composite curve $\tilde\gamma_i$ built from the portions of $\CC_{i-1}$ and $\CC_i$ on this sheet together with the ridge segments that join their endpoints.
Summing over $i$ yields the desired homology on $\ZZi$, with the interior ridge segments appearing twice and cancelling in homology.
For $n=3$ (Figure \ref{fig:focusing_may2022}), $\tilde\gamma_1$ would be the union of $a_1-d_3$ (from $\CC_3$), $b_1-d_1$ (from $\CC_1$), and the ridge segments $O-d_3$ and $O-d_1$.

An area comparison follows from the focusing property ($\theta\leq 0$) on the null sheets emanating from extremal surfaces.
Consider a single null sheet $\NN_{V_i}$.
The composite curve $\tilde\gamma_i$ is made up of the following segments:
\begin{itemize}
    \item the portion of $\CC_{i-1}$ that lies on $\NN_{V_i}$;
    \item the portion of $\CC_i$ that lies on $\NN_{V_i}$;
    \item ridge segments (if any) connecting the exit points of $\CC_{i-1}$ 
          and $\CC_i$ on the ridges that bound $\NN_{V_i}$.
\end{itemize}
Moving along the future‑directed null generators of $\NN_{V_i}$ away from the HRRT surface $\mathrm{RT}(V_i)$,  the non‑positive expansion $\theta\leq0$ implies that areas of homologous sections decrease. Therefore, we obtain
\begin{equation}\label{eq:focusing_may_V}
    |\tilde\gamma_i| \;\leq\; |\mathrm{RT}(V_i)| .
\end{equation}
Moreover, by the explicit construction of the $\tilde\gamma_i$, 
the union of all original intersection curves is contained in the union 
of all $\tilde\gamma_i$:
\[
    \bigcup_{i=1}^{n} \CC_i \;\subseteq\; \bigcup_{i=1}^{n} \tilde\gamma_i ,
\]
with strict inclusion whenever a ridge segment appears in some $\tilde\gamma_i$.

On the past‑pointing null sheets $\NN_{W_i}$, we push each $\CC_i$ backwards 
along the generators until it reaches the Cauchy slice $\Sigma_1$.
Again by $\theta\leq0$ along the past direction, and using that 
$\mathrm{RT}(X_i)$ is the minimal‑area surface on $\Sigma_1$ homologous 
to $X_i$, we have
\begin{equation}\label{eq:focusing_may_W}
    |\CC_i| \;\geq\; |\NN_{W_i}\cap\Sigma_1| \;\geq\; |\mathrm{RT}(X_i)| .
\end{equation}

Combining the two estimates and summing over $i$ gives
\begin{equation}\label{eq:focusing_may_combine}
    \sum_i |\mathrm{RT}(V_i)|
    \;\geq\; \sum_i |\tilde\gamma_i|
    \;\geq\; \sum_i |\CC_i|
    \;\geq\; \sum_i |\mathrm{RT}(X_i)| ,
\end{equation}
where the first inequality is from \eqref{eq:focusing_may_V} and the inclusion 
of the $\CC_i$, and the last two follow from \eqref{eq:focusing_may_W}.
If any ridge segment is present in the $\tilde\gamma_i$, the inclusion is 
strict and the inequality becomes strict, contradicting the hypothesis that 
$\ewV$ is connected (which would require the opposite strict inequality).
Therefore $\ewV$ cannot be fully disconnected.

We next address partially connected configurations. Recall that \(Z_{\mathrm{in}}=\partial J^+[\ew(V_1\cup\cdots\cup V_n)]\) by definition; in partially connected phases this surface is generated by null sheets from HRRT surfaces of enlarged input regions.
These enlarged inpute regions take the form $V_k\cup X_k\cup \cdots V_{k+l}$ and remain separated by a subset of the original complement regions $X_j$'s. A key consequence of entanglement wedge nesting is that this new surface $\ZZi$ lies inside the compact region $\ZZ_B$. In other words, this new $\ZZi$ lies nowhere to the future of the original $\ZZi$ formed solely from $\NN_{V_i}$.

This inclusion relation ensures that the new intersection curves $\CC_j=\NN_{W_j}\cap \ZZi$ (defined for the set of $j$ with $X_j$ remaining as complements of the enlarged input regions) remain simple and distinct from one another. With this geometric structure in place, one can perform a length/area comparison analogous to the previous connected case. This calculation leads to the conclusion that the length of HRRT surfaces for the enlarged input regions must exceed the sum $\sum |RT(X_j)|$, where the sum runs only over the $X_j$ appearing as complements. This inequality presents a contradiction, thereby proving that $\ewV$ cannot be in a partially connected state.
  
The preceding geometric proof establishes a general criterion: any condition that ensures the intersection curves $\CC_i$ are distinct on $\ZZi$ is sufficient to conclude that $\ewV$ is connected. 

Two curves $\CC_i$ and $\CC_j$ intersect exactly where the ridge
$\RR_{W_i,W_j} = \partial J^-[RT(W_i)] \cap \partial J^-[RT(W_j)] $
enters $\ZZi$. Hence
\[
\CC_i \cap \CC_j \neq \emptyset \quad\Longleftrightarrow\quad
\RR_{W_i,W_j} \cap \ZZi \neq \emptyset .
\] 
Thus ``all $\CC_i$ are distinct'' is equivalent to the condition that every output ridge lies strictly above the surface $\ZZi$:
\begin{equation}\label{eq:weaker_necessary_1}
\forall\, i\neq j,\qquad \RR_{W_i,W_j} \cap \ZZi = \emptyset \quad \text{or} \quad \big(\bigcap_{k=1}^n \, \ew(V_k^c)\big) \, \cap \,\big(\ew(W_i^c)\cap \ew(W_j^c)\big)=\emptyset.
\end{equation}
In geometric ordering language, this means that none of the output ridges $\RR_{W_k,W_l}$ enters the region enclosed by the future horizon
$\ZZi = \partial J^+\bigl[\bigcap_i \ew(V_i^c)\bigr]$.

A particularly simple condition that guarantees \eqref{eq:weaker_necessary_1} is the single‑pair $2$-to‑all criterion:
\begin{equation}\label{eq:weaker_necessary_main}
    \exists k\neq l \, \text{ such that } J^+[\ew(V_k)]\cap J^+[\ew(V_l)]\cap  \, \bigcap_{i=1}^{n} J^-[\ew(W_i)] \neq \emptyset
\end{equation}
i.e., the ridge $\RR_{V_k,V_l}$ lies below all output ridges
$\cup_{i\neq j}\RR_{W_i,W_j}$.
In contrast to the connected $2$-to‑all graph of
Ref.~\cite{may2022nton}, which demands that at least $n-1$ such pairs exist to form a connected graph, condition~\eqref{eq:weaker_necessary_main} involves only a
single pair and is therefore strictly weaker. 
%


\medskip
\noindent
\textbf{Time‑reversed sufficient condition for the output wedge.}
The entire geometric setup is time‑reversal symmetric: exchanging $c_i \leftrightarrow r_i$, $V_i \leftrightarrow W_i$, $X_i \leftrightarrow Y_i$ and future $\leftrightarrow$ past maps the $n$‑to‑$n$ problem onto itself.
Consequently, the argument that showed $\mathcal{E}(V_1\cup\cdots\cup V_n)$ is connected under condition~\eqref{eq:weaker_necessary_main} also yields a
completely analogous sufficient condition for the connectedness of $\mathcal{E}(W_1\cup\cdots\cup W_n)$.
In detail:

\begin{itemize}
    \item The analogue of the ``output ridges above $\mathcal{Z}_{\rm in}$''
    criterion~\eqref{eq:weaker_necessary_1} becomes: all \emph{input} ridges
    $\mathcal{R}_{V_i,V_j}$ lie strictly below the past horizon
    $\mathcal{Z}_{\rm out} = \partial J^{-}\bigl[\bigcap_k \mathcal{E}(W_k^c)\bigr]$,
    i.e.
    \[
    \forall\, i\neq j,\qquad
    \mathcal{R}_{V_i,V_j} \cap \mathcal{Z}_{\rm out} = \emptyset
    \quad\text{or}\quad
    \bigl(\bigcap_{k=1}^n \mathcal{E}(W_k^c)\bigr)
    \cap \bigl(\mathcal{E}(V_i^c)\cap \mathcal{E}(V_j^c)\bigr) = \emptyset .
    \]

    \item The analogue of the explicit $2$‑to‑all condition~\eqref{eq:weaker_necessary_main}
    is the existence of a single pair $k\neq l$ such that
    \[
    \bigcap_{i=1}^{n} J^{+}[\mathcal{E}(V_i)]
    \;\cap\;
    J^{-}[\mathcal{E}(W_k)]\cap J^{-}[\mathcal{E}(W_l)]
    \;\neq\; \emptyset .
    \]
    In words, the ridge $\mathcal{R}_{W_k,W_l}$ lies above all input ridges
    $\cup_{i\neq j}\mathcal{R}_{V_i,V_j}$.
\end{itemize}

\noindent
If either of these time‑reversed conditions holds, the output entanglement
wedge $\mathcal{E}(W_1\cup\cdots\cup W_n)$ is connected.

\subsubsection{Interpretation of our weaker condition}
Let us compare Theorem \ref{thm:weaker-necessary} or Equation \eqref{eq:weaker_necessary_main} with the framework of ref.~\cite{may2022nton}. In that setting, a causal relation of the form
\[V_i,V_j \to \{W_k\}_{k=1}^n\]
is understood as a primitive bulk channel: there exists a spacetime region that can receive qantum information from the input wedges $\ew(V_i)$ and $\ew(V_j)$, and subsequently redistribute it to all output wedges $\ew(W_k)$. Within the semiclassical code subspace, such primitives can be composed into a causal network that allows the implementation of general quantum information tasks in the bulk \footnote{This statement should be understood at the level of causal structure, without imposing resource constrains. In a fixed semiclassical geometry, one should restrict to protocols that can be implemented without generating an extensive amount of entanglement that would significantly backreact on the geometry. For this reason, ref. \cite{may2022nton} focuses on B84-type task, which require only $O(n)$ entangled pairs.}.

From this perspective, the arguments of ref. \cite{may2022nton} suggest that multipartite entanglement on the boudnary arises when a sufficiently rich network of such $2$-to-all channels is available. Theorem \ref{thm:weaker-necessary} shows that, in holography, this expectation is in fact overly conservative. A single $2$-to-all causal relation, involving only a pair of input regions, is already sufficient to force the entire input entanglement wedge $\ew(\Vall)$ to be connected.

From the boundary perspective, this implies that every nontrivial bipartition of the input regions carries mutual information of order \(O(1/G_N)\).
Thus, although the bulk causal condition only involves a single pair \(V_i,V_j\), its consistent reliazation in the boundary theory requires a genuinely multipartite entanglement resource, extending across all input regions.

In this sense, the theorem identifies a rigidity phenomenon: once one $2$-to-all bulk causal channel penetrates sufficiently deep through the entanglement wedges, the focusing of null congruences and the homology constraint on HRRT surfaces prevent the remaining input regions from remaining geometrically disconnected.

\subsection{Consequences of connected entanglement wedges}\label{sec:connect_cons}
We now discuss consequences of $\ewV$ being connected.
When $\ewV$ is connected, it is natural to work with the future horizon of $\ewV$, which is formed by future null sheets $\NN_{X_i}$ emanating from $RT(X_i)$.
Accordingly, we compare these sheets with past-pointing null sheets emanating from $RT(Y_i)$, and we formulate the consequences below in terms of \emph{ridges} entering $\ewV$.
\subsubsection{Consequence from a connected--disconnected area comparison}
\label{sec:cons-conn-disc}
We first derive a necessary geometric consequence of $\ewV$ being connected by a contradiction
argument comparing the connected and fully disconnected phases:
\begin{equation} \label{eq:connected_cons1_1}
    \exists\, k\neq l \ \text{such that}\ 
    \ew(Y_k^c)\cap \ew(Y_l^c)\cap \ewV \neq \emptyset .
\end{equation}
Equivalently, there exists a ridge
\[
\RR_{Y_k,Y_l}:=\partial J^-[RT(Y_k)]\cap \partial J^-[RT(Y_l)]
\]
that \emph{enters} $\ewV$, or, in the ridge-ordering language,
lies below all ridges $\RR_{X_i,X_j}:=\partial J^+[RT(X_i)]\cap \partial J^+[RT(X_j)]$
for all $i\neq j$ (since $\ewV=\bigcap_{i=1}^{n} \ew(X_i^c)$).

\paragraph{(Optional) scattering interpretation.}
If we additionally assume $\ewW$ is connected, then by Lemma~\ref{lemma:connected_characterization}
this is equivalent to $\ewY$ being fully disconnected. In that case, for any pair $k\neq l$,
$\ew(Y_k\cup Y_l)$ is disconnected (Lemma~\ref{lemma:disconnect_lemma}), i.e.\ the two components
of $(Y_k\cup Y_l)^c$ have connected entanglement wedges. We refer to these two components as
\emph{modified output regions}:
\begin{align}\label{eq:modified_output}
    \tilde{W}_k &= W_k \cup Y_{k+1}\cup \cdots \cup W_{l-1}, \nonumber\\
    \tilde{W}_l &= W_l \cup Y_{l+1}\cup \cdots \cup W_{k-1},
\end{align}
with indices understood modulo $n$.
We define the corresponding \emph{modified output points} $\tilde r_k,\tilde r_l$ by
\begin{align}
    \tilde{W}_k
    &= \hat{J}^-(\tilde r_k)\cap\hat{J}^+(c_k)\cap\hat{J}^+(c_l),
    \label{eq:tilde-rk-def1}\\
    \tilde{W}_l
    &= \hat{J}^-(\tilde r_l)\cap\hat{J}^+(c_k)\cap\hat{J}^+(c_l),
    \label{eq:tilde-rl-def1}
\end{align}
where $c_k,c_l$ are the corresponding input points.
See Figure~\ref{fig:reduced_3_2}(a) for an illustration when $n=3$.

With these definitions, \eqref{eq:connected_cons1_1} can be rephrased as
\begin{equation}\label{eq:connected_cons1_2}
    \exists\, k\neq l \ \text{such that}\ 
    \ewV\cap \ew(\tilde{W}_k \cup \tilde{W}_l)\neq \emptyset .
\end{equation}
Or in pairwise terms,
\begin{equation}\label{eq:connected_cons1_3}
    \exists\, k\neq l \ \text{such that}\ 
    \forall\, i\neq j,\quad
    \ew(\tilde{V}_i\cup \tilde{V}_j)\cap \ew(\tilde{W}_k \cup \tilde{W}_l)\neq \emptyset ,
\end{equation}
where $\tilde{V}_i\cup \tilde{V}_j=(X_i\cup X_j)^c$.

\paragraph{Proof of \eqref{eq:connected_cons1_1}.}
We argue by contradiction.
Assume that for all $k\neq l$,
\[
\ew(Y_k^c)\cap \ew(Y_l^c)\cap \ewV=\emptyset,
\]
equivalently every ridge $\RR_{Y_k,Y_l}$ lies strictly above $\ewV$.
Let $\ZZi$ denote the future horizon of $\ewV$, i.e.\ the surface formed by future null
sheets emanating from the $RT(X_i)$ up to their intersections. As in Section~\ref{sec:may2022},
each past horizon $\NN_{Y_k}:=\partial J^-[\ew(Y_k)]$ intersects $\ZZi$ in a simple curve
$\CC_k$ with endpoints $a_k,b_k$. If all ridges lie above $\ZZi$, then these curves are all
distinct: intersections $\CC_k\cap\CC_l$ would correspond to where the ridge
$\NN_{Y_k}\cap \NN_{Y_l}$ enters/leaves $\ewV$\footnote{
One can show any ridge enters and leaves $\ZZi$ at most once using Lemma~\ref{lemma:2_null_ridge}
and Corollary~\ref{lemma:3_ridge}.}.
In this situation one repeats the focusing calculation of Section~\ref{sec:may2022} to obtain
\begin{equation}\label{eq:focusing_BZ_disconnect}
    |RT(X_i)| \ge |RT(V_i)|,
\end{equation}
contradicting that $\ewV$ is connected. This establishes \eqref{eq:connected_cons1_1}.
See Figure~\ref{fig:CWT_one_ridge}(a--b) for an illustration when $n=3$.

\begin{figure}
    \centering
    \includegraphics[width=0.8\linewidth]{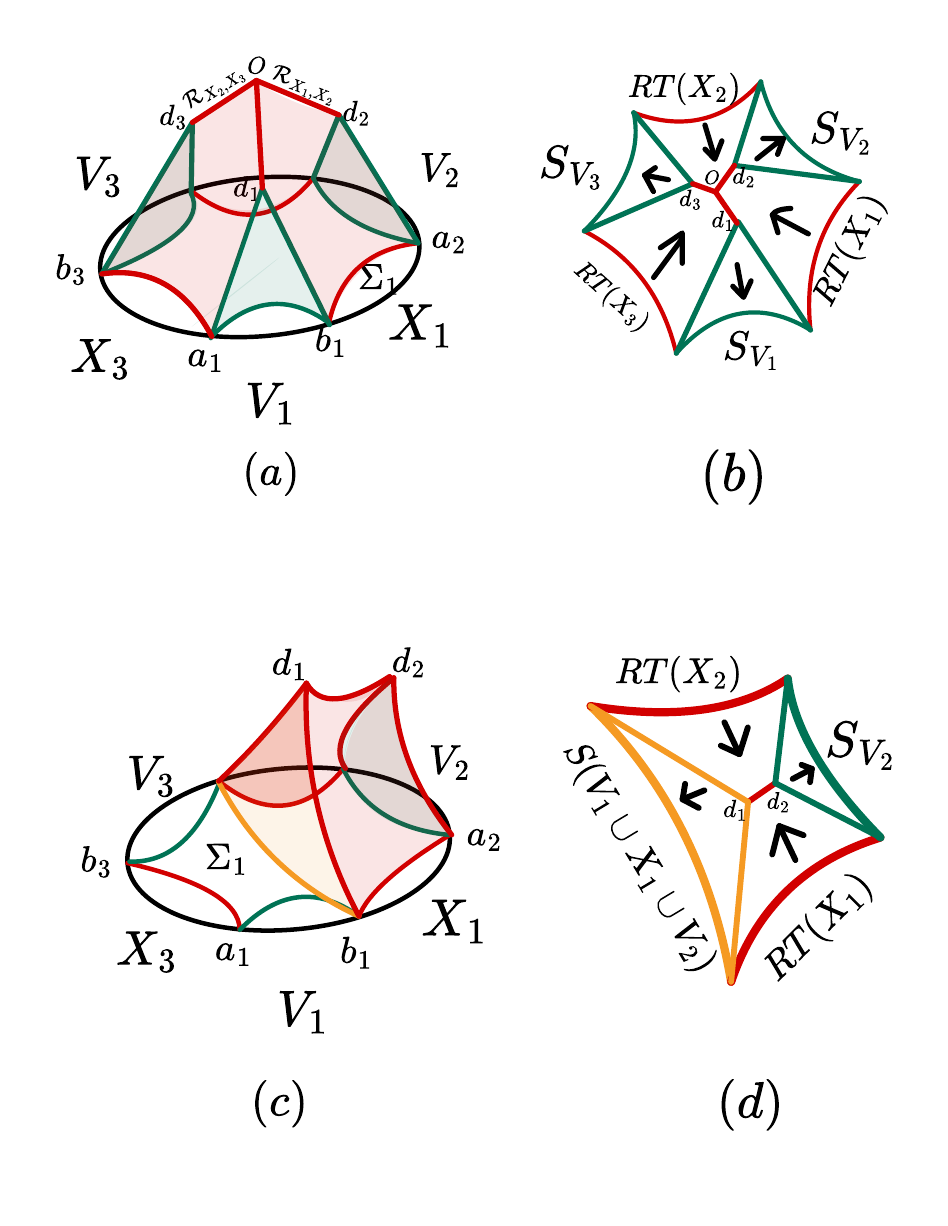}
    \caption{Illustration of the geometric structure used in deriving consequences of connected entanglement wedge $\ewV$. Panel (a) shows the geometric structure $\ZZi$ (future horizon of $\ewV$) cut by $\partial J^-[\ew(Y_i)]$, where future null sheets emanating from $RT(X_i)$ are shown in red and past null sheets emanating from $RT(Y_j)$ are shown in green. The curves $\CC_i=\partial J^-[\ew(Y_i)]\cap \ZZi$ is shown as $a_i-d_i-b_i$. Panel $(b)$ show the $\ZZi$ in a flattened fashion where $S_{V_i}$ is the intersection of null sheet $\partial J^-[\ew(Y_i)]$ with $\Sigma_1$, i.e. $S_{V_i}=\partial J^-[\ew(Y_i)]\cap \Sigma_1$. Arrows indicate the direction along which null expansion $\theta$ decreases. Panels $(c)$ and $(d)$ are similar to panels $(a)$ and $(b)$ but for enlarged input regions.}
    \label{fig:CWT_one_ridge}
\end{figure}

\begin{figure}
    \centering
    \includegraphics[width=0.8\linewidth]{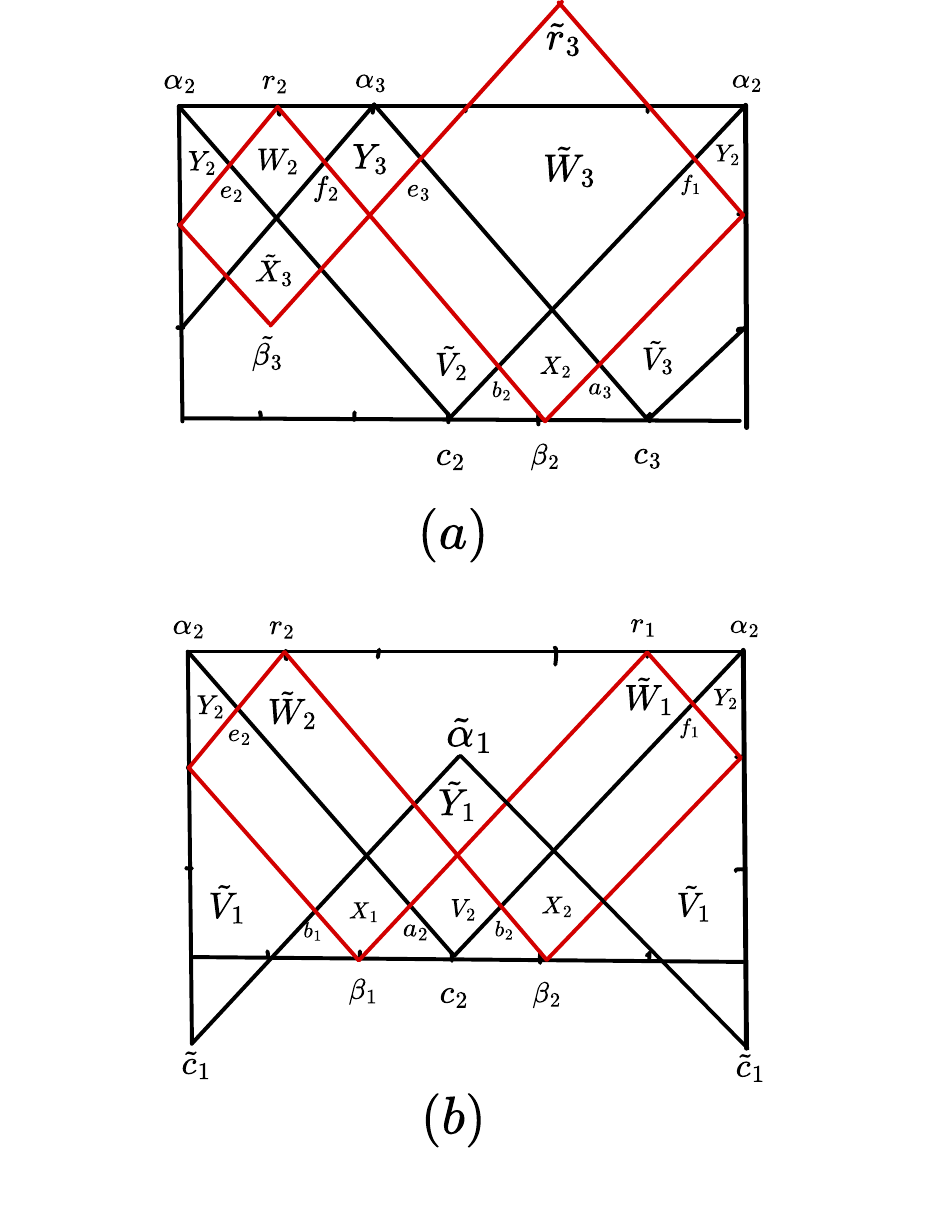}
    \caption{Illustration of modified input and output regions for $n=3$. Panel (a) show that $RT(Y_2)\cup RT(Y_3)$ can be regarded as HRRT surfaces of $\ew(W_2\cup \tilde{W}_3)$, where $\tilde{W}_3 \supseteq W_3\cup Y_1\cup W_1$. We get an effective $c_2,c_3\to r_2,\tilde{r}_3$ scattering. Panel (b) shows that $RT(X_1) \cup RT(X_2)$ can be regarded as HRRT surfaces of $\ew(\tilde{V}_1 \cup V_2)$, where $\tilde{V}_1 \supseteq V_3 \cup X_3 \cup V_1$. We get an effective $\tilde{c}_1,c_2\to r_1, r_2$ scattering. Note that in this case output regions are enlarged although output points $r_1,r_2$ remain unchanged.}
    \label{fig:reduced_3_2}
\end{figure}

\subsubsection{Consequences from pairwise consideration}\label{sec:cons-pairwise}
We now derive further necessary geometric consequences of $\ewV$ being connected by exploiting
pairwise disconnectedness of $\ew(X_i\cup X_j)$ (equivalently $I(X_i:X_j)=0$).
For $n=3$ this pairwise condition admits a direct ``partially connected phase'' interpretation
(see Remark~\ref{rmk:three_case}); for general $n$ it serves as the \emph{base layer} input of a
layered reduction on the boundary lattice.

\begin{remark}\label{rmk:three_case}
For $n=3$, $\ew(V_1\cup V_2\cup V_3)$ is connected if and only if the following hold:
\begin{itemize}
    \item $\sum_{i=1}^{3} |RT(X_i)| \le \sum_{i=1}^{3} |RT(V_i)|$,
    \item $I(X_i:X_j)=0$ for all pairs $i\neq j$.
\end{itemize}
The second item is equivalent to excluding the partially connected phases for $n=3$, so the two items above are complete in this sense.
For $n>3$, the role of the pairwise condition is different: it is the starting point of a layered reduction that ultimately produces an entering ridge at the top layer (see Section~\ref{sec:layered-reduction}).
\end{remark}

\paragraph{Illustration for $n=3$ (one step of the mechanism).}
Consider the boundary causal diamond
\[
\hat{J}^+[\beta_1]\cap \hat{J}^+[\beta_2]\cap
\hat{J}^-[\alpha_1]\cap \hat{J}^-[\alpha_3],
\]
which we denote by $\tilde{Y}_1$
(see Figure~\ref{fig:setup-3})\footnote{
This is analogous to the $2$-to-$(n-1)$ scattering region discussed in
Ref.~\cite{may2022nton}, where it is shown to be nonempty.}.
The past-pointing null sheet emanating from $RT(\tilde{Y}_1)$ intersects $\Sigma_1$
along a curve homologous to $RT(X_1\cup V_2\cup X_2)$.

Suppose that the ridge
\[
\RR_{\tilde{Y}_1,Y_2}
:=\partial J^-[RT(\tilde{Y}_1)]\cap \partial J^-[RT(Y_2)]
\]
lies above
\[
\RR_{X_1,X_2}
:=\partial J^+[RT(X_1)]\cap \partial J^+[RT(X_2)].
\]
Then the bulk geometry takes the form shown in
Figure~\ref{fig:CWT_one_ridge}(c).
Applying the standard focusing argument on null sheets emanating from extremal
surfaces, using $\theta\leq 0$ away from the extremal surface, yields
\begin{equation}
    |RT(X_1)| + |RT(X_2)|
    \;\geq\;
    |RT(V_2)| + |RT(V_1\cup X_1\cup V_2)|.
\end{equation}
This inequality contradicts the assumption that $\ew(X_1\cup X_2)$ is disconnected.
Therefore, the only consistent possibility is that the ridge
$\RR_{\tilde{Y}_1,Y_2}$ lies below $\RR_{X_1,X_2}$, or equivalently,
\begin{equation}\label{eq:connected_cons2_1_old}
    J^+[\ew(X_1)] \cap J^+[\ew(X_2)]
    \cap J^-[\ew(Y_2)] \cap J^-[\ew(\tilde{Y}_1)]
    = \emptyset .
\end{equation}

Furthermore, we can improve \eqref{eq:connected_cons2_1_old} to 
\begin{equation}\label{eq:connected_cons2_1}
    \RR_{\tilde{Y}_1,Y_2}\cap \ew(V_1\cup V_2\cup V_3) \neq \emptyset.
\end{equation}
This follows from the observation that the null sheet $\partial J^-[RT(\tilde{Y}_1)]$ \footnote{We not that by $\partial J^\pm[RT]$, we only refer to the component that points toward the bulk.} lies to the past side of $\partial J^+[RT(X_3)]$. This can be shown using a similar argument as in Lemma \ref{lemma:2_null_ridge} as follows.

First we note that $\partial \hat{J}^+[X_3]\cap \bb=\partial \hat{J}^-[r_3]$ while $\partial \hat{J}^-[\tilde{Y}_1]\cap \bb$ lies strictly to the past of $\partial \hat{J}^-[r_3]$ on $\bb$. That is, the two null sheets, $\partial J^+[RT(X_3)]$ and $\partial J^-[RT(\tilde{Y}_1)]$, do not intersect on $\bb$. Suppose, for contrdition, that in the bulk, $\partial J^-[RT(\tilde{Y}_1)]$  enters the future side of $\partial J^+[RT(X_3)]$. Then the two null sheets must bound a compact set in the bulk or there exists a null generator of $\partial J^-[RT(\tilde{Y}_1)]$ crosses $\partial J^+[RT(X_3)]$ more than once. Such a configuration is incompatible with the achronality of causal boundaries as discussed in Lemma \ref{lemma:2_null_ridge}. This contradiction shows that $\partial J^-[RT(\tilde{Y}_1)]$ lies entirely to the past of $\partial J^+[RT(X_3)]$  \footnote{That they could be tangent does not affect \eqref{eq:connected_cons2_1}.}.

\paragraph{(Optional) scattering interpretation.}
The regions $X_1,X_2$ and $\tilde{Y}_1,Y_2$ may be regarded as the causal complements
of input and output regions in an effective $2$-to-$2$ scattering process
$\tilde{c}_1,c_2\to r_1,r_2$
(see Figure~\ref{fig:reduced_3_2}(b)),
where the \emph{modified input point} $\tilde{c}_1$ is defined, along with the \emph{modified input region} $\tilde{V}_1$, through
$$\hat{J}^+[\tilde{c}_1]\cap \hat{J}^-[r_1]\cap \hat{J}^-[r_2]=\tilde{V}_1= (X_1\cup V_2\cup X_2)^c.$$
The corresponding output regions $\tilde{W}_1,\tilde{W}_2$ are enlarged accordingly, while the output points $r_1,r_2$ remain unchanged.

Note that $\ewW$ being connected does not, by itself, guarantee that
$\ew(\tilde{W}_1\cup \tilde{W}_2)$ is connected.
However, if we additionally assume that $\ew(\tilde{W}_1\cup \tilde{W}_2)$ is connected,
then \eqref{eq:connected_cons2_1_old} can be rephrased as
\begin{equation}\label{eq:connected_cons2_3}
    \ew(\tilde{V}_1\cup V_2)
    \cap
    \ew(\tilde{W}_1\cup \tilde{W}_2)
    \neq \emptyset.
\end{equation}
Accordingly, \eqref{eq:connected_cons2_1} can be rephrased as
\begin{equation}\label{eq:connected_cons2_2}
    \ew(V_1\cup V_2\cup V_3)
    \cap
    \ew(\tilde{W}_1\cup \tilde{W}_2)
    \neq \emptyset.
\end{equation}

\paragraph{Generalization to arbitrary $n$: pairwise entering ridges.}
We now extend the above argument to a general $n$-to-$n$ configuration.

Recall from Section \ref{sec:boundary_setup_n} that boundary left- and right- pointing future-null rays from $c_1, \beta_1,\dots,c_n,\beta_n$ form a lattice on $\bb$. These null rays are also boundary past-null rays emanating from $\alpha_1,r_1,\cdots,\alpha_n,r_n$. 

For any pair $(X_i,X_j),i\neq j$, their complements $(X_i\cup X_j)^c$ consists of two connected components \footnote{Strictly speaker, $\tilde{V}_i$ is the boundary domain of dependence of $V_{j+1}\cup \cdots \cup V_i$. But since we use a domain of dependence and its Cauchy slice interchangablely, we just write as in the text.}
\[\tilde{V}_i=V_{j+1}\cup \cdots \cup V_i \quad \text{and} \quad  \tilde{V}_j=V_{i+1} \cup \cdots \cup V_j,\]
where all indices are understood modulo $n$.
Also recall that points $\beta_i$ and $\beta_j$ as well as antipodal points $r_i$ and $r_j$ are associated with $X_i$ and $X_j$ (see \eqref{eq:X_defn}).

The lattice structure on $\bb$ implies that there exist points $\tilde{\alpha}_i,\tilde{\alpha}_j$ and their past antipodal points $\tilde{c_i}$ and $\tilde{c}_j$ such that their boundary null rays bounds $\tilde{V}_i$ and $\tilde{V}_i$. More precisely, we have
\begin{align}
    \tilde{V}_i = \hat{J}^+[\tilde{c}_i]\cap \hat{J}^-[r_i]\cap \hat{J}^-[r_j],\\
    \tilde{V}_j = \hat{J}^+[\tilde{c}_j]\cap \hat{J}^-[r_i]\cap \hat{J}^-[r_j].   
\end{align}
The two points $\tilde{\alpha}_i,\tilde{\alpha}_j$ also define causal diamonds $\tilde{Y}_i$ and $\tilde{Y}_j$, i.e.
\begin{align}
        \tilde{Y}_i=\hat{J}^-[\tilde{\alpha}_i]\cap \hat{J}^+[\beta_i] \cap \hat{J}^+[\beta_j],\\
    \tilde{Y}_j=\hat{J}^-[\tilde{\alpha}_j]\cap \hat{J}^+[\beta_i] \cap \hat{J}^+[\beta_j]. 
\end{align}

It follows that $\partial J^-[RT(\tilde{Y}_i)]\cap \Sigma_1$ and $\partial J^-[RT(\tilde{Y}_j)]\cap \Sigma_1$ are
homologous to the HRRT surfaces of the two components of $(X_i\cup X_j)^c$, $\tilde{V}_i$ and $\tilde{V}_j$.
Repeating the same geometric construction and focusing calculation, we would get that the ridge $\RR_{\tilde{Y}_i,\tilde{Y}_j}$ lies below $\RR_{X_i,X_j}$, or equivalently,
\begin{equation}\label{eq:connected_cons2_4old}
    \forall\, i\neq j,\qquad
     J^+[\ew(X_i)]\cap J^+[\ew(X_j)] \cap J^-[\ew(\tilde{Y}_i)] \cap J^-[\ew(\tilde{Y}_j)] = \emptyset.
\end{equation}

As in the $3$-party case, the above ordering can be strengthened to show that the ridge must enter $\ewV$, i.e.,
\begin{equation}\label{eq:connected_cons2_4}
    \forall\, i\neq j,\qquad
    \ewV \cap \RR_{\tilde{Y}_i,\tilde{Y}_j} \neq \emptyset .
\end{equation}
This follows from the observations that the ridge $\RR_{\tilde{Y}_i,\tilde{Y}_j}$ lies to the past of all null sheeets $\partial J^+[RT(X_k)], \, \forall k\neq i,j$. The argument proceeds as follows.

The sets $\hat{J}^+[\tilde{V}_i]$ and $\hat{J}^-[\tilde{Y}_i]$ intersect only along their boundaries, since $\tilde{c}_i$ and $\tilde{\alpha}_i$ are antipodal points. In particular, their intersection has empty interior. Also note that $X_k\subset \tilde{V}_i$ for $j+1\leq k \leq i-1$. Therefore, for $j+1\leq k \leq i-1$, $\partial J^+[RT(X_k)],$ and $\partial J^-[RT(\tilde{Y}_i)]$ do not intersect on $\bb$; more precisely, $\partial J^-[RT(\tilde{Y}_i)]\cap \bb$ lies to past side of $\partial J^+[RT(X_k)]\cap \bb$. By a similar argument as in Lemma \ref{lemma:2_null_ridge}, $\partial J^-[RT(\tilde{Y}_i)]$ also lies to the past of $\partial J^+[RT(X_k)]$ in the bulk, for $j+1\leq k \leq i-1$.

For $i+1\leq k \leq j-1$, we have $X_k\subset \tilde{V}_j$. A similar arugment yields that $\partial J^-[RT(\tilde{Y}_j)]$ lies to the past of $\partial J^+[X_k]$ for such $k$. 

Combining the two arguments, we get that the ridge  $\RR_{\tilde{Y}_i,\tilde{Y}_j}=\partial J^-[RT(\tilde{Y}_i)]\cap\partial J^-[RT(\tilde{Y}_j)]$ lies to the past of all null sheets $\partial J^+[RT(X_k)], \, \forall k\neq i,j$. This observation together with \eqref{eq:connected_cons2_4old} leads to \eqref{eq:connected_cons2_4}, since $\ewV=\big(\ew(X_i^c)\cap \ew(X_j^c) \big) \, \cap \, \big( \bigcap_{k\neq i,j}\ew(X_k^c)$ \big).

\paragraph{(Optional) scattering interpretation.}
Similar to the $n=3$ case, we have an effective $2$-to-$2$ scattering process for an aribitrary pair $i\neq j$: \[\tilde{c}_i, \tilde{c}_j \to r_i, r_j.\]
Let us also modified output region accordingly: 
\begin{align*}
    \tilde{W}_i=J^-[r_i]\cap J^+[\tilde{c}_i]\cap J^+[\tilde{c}_j],\\
    \tilde{W}_j=J^-[r_j]\cap J^+[\tilde{c}_i]\cap J^+[\tilde{c}_j].    
\end{align*}

If we additionally assume that $\ew(\tilde{W}_i\cup \tilde{W}_j)$ is connected, then \eqref{eq:connected_cons2_4old} can be rephrased as
\begin{equation}
    \ew(\tilde{V}_i\cup \tilde{V}_j)\cap \ew(\tilde{W}_i \cup \tilde{W}_j) \neq \emptyset.
\end{equation}
Accordingly, \eqref{eq:connected_cons2_4} can be rephrased as 
\begin{equation}
\ewV \cap \ew(\tilde{W}_i \cup \tilde{W}_j) \neq \emptyset.    
\end{equation}

\subsubsection{Layered consequences from successive regroupings}
\label{sec:layered-reduction}
We now organize the general-$n$ consequences into a layered reduction on the boundary lattice. At each layer one
regroups the boundary into an effective \(n'\)-to-\(n'\) configuration and applies the top-layer entering-ridge argument to that effective configuration.

More precisely, since \(\ewV\) is connected, Lemma~\ref{lemma:disconnect_lemma} implies that every subcollection of the complementary regions \(X_i\) has fully disconnected entanglement wedge. Let
\[
A\subset \{1,\ldots,n\},\qquad |A|=m\ge 2 .
\]
The boundary complement
\[
\left(\bigcup_{a\in A} X_a\right)^c
\]
decomposes, with respect to the cyclic boundary lattice, into \(m\) connected effective input regions
\[
\widetilde V^{(A)}_1,\ldots,\widetilde V^{(A)}_{m} .
\]
Equivalently, the elements of \(A\) specify the \(m\) remaining gaps in the boundary lattice, while the omitted \(X\)'s are absorbed into the effective input regions \(\widetilde V^{(A)}_\mu\). For this effective \(m\)-party configuration, define the corresponding
auxiliary output diamonds
\[
\widetilde Y^{(A)}_1,\ldots,\widetilde Y^{(A)}_{m}
\]
by the same antipodal-point construction used for the original \(Y_i\)'s.

We organize the consequences by increasing subset size 
\[m=2,\ldots,n.\]
At level \(m\), we consider all subsets 
\[A\subset \{1,\dots,n\}, \qquad |A|=m.\]
Since the construction proceeds in increasing order of \(m\), all statements associated with proper subsets \(B\subset A\) have already appear at eariler levels, i.e. smaller $m$. Therefore, for a subset \(A\) of size $m$, the only genuinely new statement is the top-layer consequence of the effective $m$-party configuration determined by $A$.

Applying the top-layer connected--disconnected comparison of Section \ref{sec:cons-conn-disc} to the $m$-level effective configuration gives
\[
\exists\, \mu\neq \nu
\quad\text{such that}\quad
\ew(\tilde{V}_1^{(A)}\cup \cdots \cup \tilde{V}_{m}^{(A)}) \cap
\RR_{\widetilde Y^{(A)}_\mu,\widetilde Y^{(A)}_\nu}
\neq \emptyset.
\]
This is the entering-ridge consequence at layer \(m\). Note that we do not directly get $\ewV \cap \RR_{\widetilde Y^{(A)}_\mu,\widetilde Y^{(A)}_\nu} \neq \emptyset$.

To understand the difference, it is useful to classify the original complementary regions $X_i$ into three types relative to a given effective configuration and a chosen auxiliary ridge $\RR_{\widetilde Y_k,\widetilde Y_l}$:
\begin{itemize}

\item[(I)] \textbf{Effective complement regions.}
These are the $m$ regions $X_{i_1},\ldots,X_{i_{m}}$ that remain as the complement gaps in the effective $m$-party problem. If the ridge enters the effective wedge, it is controlled with respect to these $X_{i_a}$, i.e.\ it lies inside
\[\bigcap_{a=1}^{m} \ew(X_{i_a}^c).\]

\item[(II)] \textbf{Absorbed regions in the selected blocks.}
These are the $X_i$ contained inside the two effective input regions $\widetilde V_k$ and $\widetilde V_l$ whose associated auxiliary diamonds form the ridge. For these $X_i$, the boundary ordering argument of Lemma~\ref{lemma:2_null_ridge} gives a one-sided control, so that the ridge lies to the past of the corresponding null sheets \footnote{We note that the boudnary ordering argument of Lemma 2.1 does not apply to the case that the two causal boundaries share a null ray on $\bb$. For example, we cannot conclude that $\partial J^-[RT(\tilde{Y}_1)]$ lies to the past of $\partial J^-[RT(Y_1)]$.}.

\item[(III)] \textbf{Remaining absorbed regions.}
These are the $X_i$ absorbed into the other effective input regions $\widetilde V^{(A)}_\mu$ with $\mu\neq k,l$. For these regions, neither mechanism applies, and no ordering relative to the corresponding null sheets is available.
\end{itemize}

\medskip

This classification explains the special role of the endpoint cases.

\paragraph{Base layer ($m=2$).}
In this case, there are only two effective input regions
\[
\widetilde V_k,\quad \widetilde V_l.
\]
All $X_i$ fall into Type~(I) or Type~(II), and there are no Type~(III) regions. Consequently, the auxiliary ridge satisfies the full set of ordering conditions and lies inside
\[
\ewV=\bigcap_{i=1}^n \ew(X_i^c).
\]
This is precisely the pairwise entering-ridge statement of Section~\ref{sec:cons-pairwise}.

\paragraph{Top layer ($m=n$).}
In this case, no regrouping occurs and all $X_i$ are of Type~(I). The top-layer argument directly produces a ridge $\RR_{Y_k,Y_l}$ entering $\ewV$.

\paragraph{Intermediate layers ($2<m<n$).}
In these cases, Type~(III) regions are present. Although the auxiliary ridge enters the effective entanglement wedge, the lack of control with  respect to the Type~(III) regions prevents one from concluding that it lies inside $\ewV$. Thus intermediate layers do not, by themselves, produce additional entering-ridge constraints on $\ewV$. 

However, for certain choices of the auxiliary ridge
$\RR_{\widetilde Y_k,\widetilde Y_l}$, it may happen that all $X_i$ not in the effective complement are absorbed into the two selected blocks $\widetilde V_k$ and $\widetilde V_l$, so that no Type~(III) regions are present. In such cases, the corresponding ridge does enter $\ewV$.

\medskip

We summarize discussions in this section in the following theorem.
\begin{theorem}[Entering-ridge consequences of connected \(\ewV\)]
\label{thm:sufficient_n}
Assume Assumption~\ref{assumption:1}. If \(\ewV\) is connected, then:

\begin{enumerate}
\item \textbf{Top layer.}
There exist a pair \(k\neq l\) such that
\[
\ewV\cap \RR_{Y_k,Y_l}\neq \emptyset,
\qquad
\RR_{Y_k,Y_l}:=
\partial J^-[RT(Y_k)]\cap \partial J^-[RT(Y_l)].
\]

\item \textbf{Pairwise/base layer.}
For every pair \(i\neq j\), let
\[
(X_i\cup X_j)^c=\widetilde V_i\cup \widetilde V_j
\]
be the decomposition into two connected components on the boundary
lattice, and let \(\widetilde Y_i,\widetilde Y_j\) be the corresponding
auxiliary output diamonds. Then
\[
\ewV\cap \RR_{\widetilde Y_i,\widetilde Y_j}\neq \emptyset, \quad \forall i\neq j.
\]

\item \textbf{Intermediate layers.}
More generally, for every subset
\[
A\subset\{1,\ldots,n\},\qquad |A|=m\ge 2,
\]
let
\[
\left(\bigcup_{a\in A}X_a\right)^c=
\widetilde V^{(A)}_1\cup\cdots\cup \widetilde V^{(A)}_{m}
\]
be its decomposition into effective input regions, and let
\(\widetilde Y^{(A)}_1,\ldots,\widetilde Y^{(A)}_{m}\) 
be the associated auxiliary output diamonds. Then there exist \(\mu\neq\nu\) such that
\[
\ew(\widetilde V^{(A)}_1\cup\cdots\cup \widetilde V^{(A)}_{m}) \cap
\RR_{\widetilde Y^{(A)}_\mu,\widetilde Y^{(A)}_\nu}
\neq \emptyset .
\]
\end{enumerate}

\end{theorem}

\begin{remark}
For $n=3$, connectedness of $\ew(V_1\cup V_2\cup V_3)$ is equivalent to the two conditions in Remark~\ref{rmk:three_case}. In particular, the pairwise condition $I(X_i:X_j)=0$ admits a direct interpretation in terms of excluding partially connected phases.

For $n>3$, the situation is intrinsically multipartite: connectedness of $\ewV$ imposes constraints not captured by pairwise statements alone. In the ridge language, this multipartite structure is  naturally organized by the layered reduction of Section~\ref{sec:layered-reduction}, which iteratively propagates ``entering ridge'' statements from the top-level ridge $\RR_{Y_k,Y_l}$ entering $\ewV$ to auxiliary diamonds.
\end{remark}

\subsection{Generalized bulk scattering regions}\label{sec:S_E_condition}
In the $2$-to-$2$ asymptotic sattering problem, a generalized bulk scattering region
\begin{equation}
    \mathcal{S}_E = \ew(V_1\cup V_2) \cap \ew(W_1\cup W_2)
\end{equation}
or the modified version
\begin{equation}
    \tilde{\mathcal{S}}_E = \big[\ew(V_1\cup V_2)/(\ew(V_1)\cup \ew(V_2))\big] \bigcap \big[\ew(W_1\cup W_2)/(\ew(W_1)\cup \ew(W_2))\big]
\end{equation}
was identified to characterize connectedness of $\ewV$ and $\ewW$. 
Here we trivially generalize the above bulk scattering region as
\begin{equation}
    \mathcal{S}_E = \ewV \cap \ewW
\end{equation}
and discuss necessary conditions for $\mathcal{S}_E\neq \emptyset$.

\begin{theorem}\label{thm:S_E_condition}
Assume the standard conditions listed in Assumption \ref{assumption:1}. Suppose that both $\ewV$ and $\ewW$ are connected.

    If, in addition, for every pair $i\neq j$, the ridge $\mathcal{R}_{Y_i,Y_j}$ enters $\ewV$, i.e.,
    \begin{equation}\label{eq:SE_nonempty}
        \ewV \cap \ew(Y_i^c)\cap \ew(Y_j^c)\neq \emptyset, \quad \forall \, i\neq j,
    \end{equation}
    then the generalized scattering region is nonempty:
    $$\mathcal{S}_E \neq \emptyset.$$

Equivalently, condition~\eqref{eq:SE_nonempty} can be expressed in terms of the enlarged output regions
\[
\tilde W_i\cup \tilde W_j = (Y_i\cup Y_j)^c,
\]
as
\begin{equation}
    \ewV \cap \ew(\tilde W_i\cup \tilde W_j)\neq \emptyset,
    \qquad \forall\, i\neq j.
\end{equation}
\end{theorem}

\begin{remark}
The theorem is a Helly‑type result: pairwise intersections among imply a common intersection. The convexity required in classical Helly theorems is replaced here by preperties, e.g. the achronality, of causal boundaries.
\end{remark}
\begin{remark}
    One can use focusing calculations, which should be very familiar by now, to show that if the input regions $V_i$'s have pairwise connected entanglement wedges, i.e. $I(V_i:V_j)>0, \forall i\neq j$, then \eqref{eq:SE_nonempty} is necessarily satisfied.
\end{remark}

\begin{proof}
    We first note that $\ewV \cap \ewW\neq \emptyset$ is equivalent to the future horizon of $\ewV$ and the past horizon of $\ewW$ intersects because both $\ewV$ and $\ewW$ are compact sets in $\overline{M}$.

    We use a similar geometric structure as before.
    Consider the upper horizon of $\ewV$ or $\ZZi$ in the notation of section \ref{sec:connect_cons}. As argued above, $\NN_{Y_i}=\partial J^-[\ew(Y_i)]$ intersect $\ZZi$ at a simple curve $\CC_i$, with endpoints of $\partial V_i$, i.e. $a_i$ and $b_i$. Moreover, $\CC_i$ together with the boundary future horizon $\hat{H}^+[V_i]$ of $V_i$ bounds a compact set $D_i$ on $\ZZi$ (For an illustration see Figure \ref{fig:ZZin} with $X$'s and $V$'s switched). We have the following two claims about $D_i$'s.

\noindent\textbf{Claim 1.} Each $D_i$ is topologically a disk.

\smallskip

\noindent\emph{Proof of Claim 1.}
    First, the boundary future horizon $\hat{H}^+[V_i]$ lies on $\bb$ while the interior of $\CC_i$ lies in the bulk. Therefore, the two curves meet only at their common endpoints $a_i,b_i$ and do not intersect elsewhere. It follows that 
    \[\CC_i \cup \hat{H}^+[V_i]\]
    is a simple closed curve on $\ZZi$.
      
The hypersurface $\ZZi$ is constructed from null sheets emanating from HRRT surfaces and glued along codimension-two ridges. In particular, $\ZZi$  is a disk with boundary \(\hat H^+[V_1]\cup RT(X_1)\cup  \cdots\cup \hat H^+[V_n]\cup RT(X_n)\). By the Jordan curve theorem on a disk, the simple closed curve \(\CC_i\cup \hat H^+[V_i]\) bounds a subdisk of \(\ZZi\). This subdisk is \(D_i\).
\hfill $\triangle$

\medskip
\noindent\textbf{Claim 2.} The regions $D_i$ are cyclically ordered.

\smallskip

\noindent\emph{Proof of Claim 2.}
This follows directly from the configuration on $\bb$: since the causal diamonds $V_i$ are cyclically ordered on $\bb$, their future horizons $\hat{H}^+[V_i] \subset \bb$ inherit the same cyclic ordering. By definition, each $D_i$ is attached to the corresponding boundary $\hat{H}^+[V_i]$. Therefore, the disks $D_i$ are arranged in the same cyclic order on $\ZZi$.
\hfill $\triangle$

\medskip 
    
    The condition \eqref{eq:SE_nonempty} is equivalent to the following condition:
    \begin{equation}\label{eq:D_pairwise}
        D_i\cap D_j\neq \emptyset,\qquad \forall i\neq j.
    \end{equation}
    Thus it suffices to show that the pairwise intersections of the $D_i$ force a common point.
    
    \medskip
    \noindent\textbf{Graph/Tree of null generators.}
    The surface $\ZZi$ is foliated by future‑directed null geodesics---the generators of the sheets $\mathcal{N}_{X_i}=\partial J^+[\mathrm{RT}(X_i)]$.
    These generators either end on a ridge $\mathcal{R}_{X_i,X_j}$ (where two sheets meet) or on a boundary arc of $\hat H^+[V_k]$. For each $k$, the two boundary arcs of $\hat H^+[V_k]$ meet at the future tip $d_k$.
    
    Define the set
    \[
    R \;=\; \bigl(\text{union of all ridge segments in } \ZZi\bigr)\;\cup\;\{d_1,\dots,d_n\}.
    \]
    Construct a continuous retraction $r:\ZZi\to R$ as follows.
    If a point lies on a generator that ends at a ridge point, map it to that ridge point.
    If a point lies on a generator that ends on $\hat H^+[V_k]$, map it to the tip $d_k$.
    On the boundary arcs themselves the map sends every point to the corresponding tip; this is consistent because the boundary arcs consist of generators that end on the tip.
    The map $r$ is continuous because the family of generators varies continuously and the assignment of tips is locally constant along the boundary.
    
    \noindent\textbf{$R$ is a finite tree.}
    The ridges are simple curves that intersect only at ridge-junctions where more than three null sheets meet (Corollary~\ref{lemma:3_ridge}); they contain no loops.
    Each ridge either ends at a tip $d_i$ on $\partial M$ or becomes an edge connecting two junctions.
    The resulting space $R$ is a one‑dimensional compact graph.

    Because $\ZZi$ is a topological disk (Claim~1), $R$ is simply connected and therefore a finite tree.
    Its leaves are precisely the tips $d_1,\dots,d_n$; its interior vertices are the ridge‑junctions.

    \noindent\textbf{The $D_i$ project to subtrees.}
    The curve $\CC_i$ lies in the achronal past lightsheet $\NN_{Y_i}=\partial J^-[\mathrm{RT}(Y_i)]$; hence any null generator of $\ZZi$ can intersect $\CC_i$ at most once.
    For a generator that intersects $\CC_i$ and ends on the boundary arc $\hat H^+[V_i]$, the segment between $\CC_i$ and the boundary $\bb$ lies inside $D_i$; under the retraction this segment maps into the tip $d_i$.
    For a generator that intersects $\CC_i$ and ends on a ridge point, the segment between $\CC_i$ and the ridge lies inside $D_i$; under the retraction this segment maps into the corresponding ridge point.
    Hence the retracted image
    \[
    I_i := r(D_i) \;\subset\; R
    \]
    is connected and contains the leaf $d_i$ (the entire boundary arc $\hat H^+[V_i]$ maps to $d_i$).
    
    Moreover, $I_i$ contains no other leaf $d_k$ ($k\neq i$): if a generator ended at $d_k$ and carried a point of $D_i$, that generator would have to enter $D_i$ through $\CC_i$ and exit again to reach the boundary at $d_k$, intersecting $\CC_i$ twice---violating the achronality of $\NN_{Y_i}$.

    A connected subset of a tree is itself a subtree; therefore each $I_i$ is a subtree of $R$.

    \noindent\textbf{Helly property for trees.}
    Equation~\eqref{eq:D_pairwise} implies $I_i\cap I_j\neq\varnothing$ for all $i\neq j$.
    The \emph{Helly property for trees} (see Appendix \ref{sec:app} for a statement) states that any finite family of subtrees of a tree that pairwise intersect has a common vertex.
    Hence there exists a vertex $v\in R$ with $v\in\bigcap_{i=1}^{n} I_i$.
    
    \noindent\textbf{The common vertex is a bulk ridge‑triple point.}
    Each $I_i$ is attached to a distinct leaf $d_i$.
    Consequently $v$ cannot be a leaf, for it would then have to equal every leaf simultaneously.
    Hence $v$ is an interior vertex of $R$, i.e.\ a point where three or more ridges meet---a bulk ridge‑junction point $p$.
    
    \noindent\textbf{The ridge point lies in all $D_i$.}
    Since $p = v \in I_i = r(D_i)$, there exists a point $q_i\in D_i$ with $r(q_i)=p$ for all $i$.
    The generator through $q_i$ ends at $p$, and because a generator can intersect $\CC_i$ at most once, the segment of that generator between $q_i$ and $p$ lies entirely in $D_i$.
    As $D_i$ is compact, $p$ (the endpoint of that segment) belongs to $D_i$.
    This holds for every $i$, so
    \[
    p \;\in\; \bigcap_{i=1}^{n} D_i \;\neq\; \varnothing ,
    \]
    which is exactly $\mathcal{S}_E\neq\varnothing$.
\end{proof}

 \subsubsection{Interpretation and optimality of the condition.}\label{sec:S_E_nonempty_discussion}
Condition~\eqref{eq:SE_nonempty} can be expressed in terms of the enlarged
output regions $\tilde W_i\cup\tilde W_j = (Y_i\cup Y_j)^c$ as
\[
    \ewV \cap \ew(\tilde W_i\cup\tilde W_j)\neq\varnothing ,\qquad \forall\, i\neq j .
\]
For $n=2$ there is only one such pair, and the condition reduces to 
$$\ew(V_1\cup V_2)\cap\ew(W_1\cup W_2)\neq\varnothing.$$
By Theorem~\ref{thm:sufficient_n}, this is automatically guaranteed when both $\ew(V_1\cup V_2)$ and $\ew(W_1\cup W_2)$ are connected, so Theorem~\ref{thm:S_E_condition} recovers the known result that connected input and output wedges imply $\mathcal{S}_E\neq\emptyset$ \cite{zhao2025proof,lima2025sufficientGCWT}.  (The full equivalence between wedge connectivity and a non‑empty intersection region holds for the modified region in which the individual wedges are subtracted~\cite{zhao2025proof,lima2025sufficientGCWT}.)
For $n>2$, however, it becomes a genuinely multipartite constraint: the bulk must contain a common ``meeting region'' for all pairs of modified output wedges inside the input wedge.

The strength of this condition is best seen by comparing it with the consequences of wedge connectivity.
Theorem~\ref{thm:sufficient_n} shows that when the input wedge is connected, there exists at least one complementary ridge
$\RR_{Y_k,Y_l}$ that lies below all $X$-sheets (i.e.\ enters
$\ewV$).  By time reflection, connectedness of the output wedge
guarantees at least one ridge $\RR_{X_i,X_j}$ that lies above all $Y$-sheets.  Condition~\eqref{eq:SE_nonempty}, on the other hand, demands that \emph{every} ridge $\RR_{Y_i,Y_j}$ lies below the $X$-sheets---equivalently, all $Y$-ridges are ``lower'' than all $X$-ridges.  Thus, our condition for $\mathcal{S}_E\neq\varnothing$ is strictly stronger
than the simultaneous connectedness of the input and output wedges.: while a single ridge from each family suffices to enforce connectedness the two wedges, a common intersection region exists only when the whole family of past‑pointing ridges passes through the future horizon of the input wedge.

We have not proved that~\eqref{eq:SE_nonempty} is necessary for
$\mathcal{S}_E\neq\varnothing$; the condition seems difficult to relax significantly.  
Proving that no weaker condition exists would amount to a full characterisation of $\mathcal{S}_E\neq\varnothing$, which we leave as an open problem.
Constructing explicit examples with connected wedges but $\mathcal{S}_E=\varnothing$ would be an important step in that
direction; we expect that a systematic numerical investigation,
e.g.\ using matter perturbations in pure $\mathrm{AdS}_3$ with
$n$-to-$n$ configurations, can produce such examples.  This would also clarify the gap between mere pairwise connectivity 
(governed by pairwise mutual information) and the genuinely multipartite  entanglement structure that is constrained by the monogamy of mutual information and the no‑GHZ result~\cite{holography-no-GHZ}.

\section{Conclusion and Discussion}\label{sec:discussion}
Similar to the proofs of CWT and $n$-to-$n$, the proofs given here will also work for semiclassical spacetimes that satisfy the quantum maximin formula \cite{akers2020quantummaxmin} and the quantum focusing conjecture \cite{bousso2016quantumfocusing}.

The configuration of disjoint input regions $\Vall$ fully specifies the boundary setup for the asymptotic $n$-to-$n$ scattering problem. Consequently, the results derived here apply generally to the entanglement wedge structure of multipartite, spacelike-separated boundary regions with shape of causal domains (not allowing adjacent regions).

\subsection{Two kinds of scattering regions.}\label{sec:comparison_Null_sheet}
We make an observation that would help to understand relations among different generalizations of the $2$-to-$2$ and $n$-to-$n$ connected wedge theorems.

We noted that there are two null sheets anchored to $\hat{J}^+[c_i]$ appeared in the proofs: the future-pointing null sheet $\NN_{V_i}$  emanating from $RT(V_i)$ and the past-pointing null sheet $\NN_{Y_i}$ emanating from $RT(Y_i)$. 
Similarly, for $\hat{J}^+[\beta_i]$, we would consider the future-pointing null sheet $\mathcal{N}_{X_i}$ emanating from $RT(X_i)$ and the past-pointing null sheet $\mathcal{N}_{W_i}$ emanating from $RT(W_i)$.

One can argue for an positioning relation between $\mathcal{N}_{V_i}$ and $\mathcal{N}_{Y_i}$ with the same $i$. See Figure \ref{fig:NV_NY} for an illustration. First note that the future pointing null sheet $\mathcal{N}_{V_i}$ from $RT(V_i)$ is the future horizon of $\ew(V_i^c)$. On the boundary $\bb$, $\hat{D}[V_i^c]$ contains $Y_i$ (since $Y_i\subseteq \hat{J}^-[\alpha_i]$ and $\alpha_i$ is antipodal point of $c_i$), thus $\ew(Y_i) \subseteq \ew(V_i^c)$ by entanglement wedge nesting. This implies that $RT(Y_i)$ is spacelike separated from $\NN_{V_i}$ and moreover, $RT(Y_i)$ lies closer to $Y_i$ than $\NN_{V_i}\cap \Sigma_2$. Similarly, the past null sheet $\NN_{Y_i}$ is the past horizon of $\ew(Y_i^c)$ and $\ew(Y_i^c) \supseteq \ew(V_i)$ . This implies that $RT(V_i)$ is spacelike separated from $\NN_{Y_i}$ and moreover, $RT(V_i)$ lies closer to $V_i$ than $\NN_{Y_i} \cap \Sigma_1$. To sum up, we have two null sheets, $\mathcal{N}_{V_i}$ and $\mathcal{N}_{Y_i}$,  coinciding on $\bb$, with one lyinng to a specific side relative to the other at $\Sigma_1$ and $\Sigma_2$: $\NN_{V_i}$ lies closer to $V_i$ (or closer to $\bb$) than $\NN_{Y_i}$ on $\Sigma_1$ while $\NN_{Y_i}$ lies closer to $Y_i$  (or closer to $\bb$) than $\NN_{V_i}$ on $\Sigma_2$.  If they intersect in the bulk and do not coincide everywhere, then their intersection would necessarily contain either multiple connected components or a closed loop. Both possibilities are excluded by the same separation and generator-uniqueness argument used in Lemma~\ref{lemma:2_null_ridge}.
To avoid contradiction with causality, they have empty intersection or coincide completely. One can also derive this fact from the maximum principle using a similar argument as in refs. \cite{galloway1999maximum} or \cite{wall2014maximin}.
\begin{figure}
    \centering
    \includegraphics[width=0.3\linewidth,trim={10 5 10 5},clip]{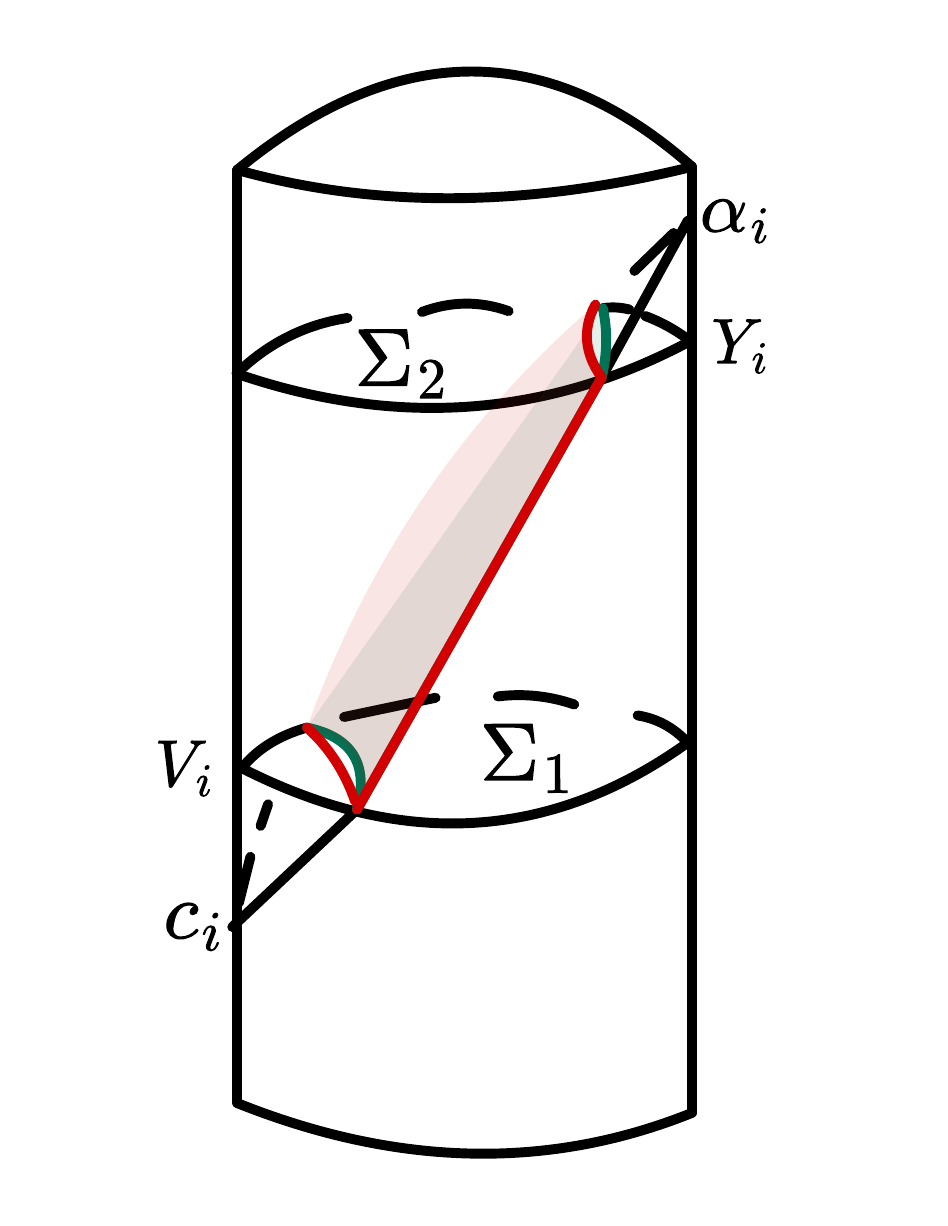}
    \caption{Illustration of position relation between $\NN_{V_i}$ and $\NN_{Y_i}$. The red sheet or curves are associated with $\NN_{V_i}=\partial J^+[RT(V_i)]$. The green sheet or curves are associated with $\NN_{Y_i}=\partial J^-[RT(Y_i)]$. The points $c_i$ and $\alpha_i$ are antipodal points.}
    \label{fig:NV_NY}
\end{figure}
A similar reasoning would show that $\NN_{X_i}$ and $\NN_{W_i}$ either coincide completely or have empty intersection. Specifically, $\NN_{X_i}$ lies closer to $X_i$ than $\NN_{W_i}$ (or equivalently $\NN_{W_i}$ lies closer to $W_i$ than $\NN_{X_i}$).

In the generalized $2$-to-$2$ Connected Wedge Theorem by ref. \cite{may2021region}, a bulk region 
\begin{equation}
    \mathcal{S}'_E = J^+[\ew(V_1)] \cap J^+[\ew(V_2)] \cap J^-[\ew(W_1)] \cap J^-[\ew(W_2)]
\end{equation}
is used, whose nonemptiness was shown to imply connectedness of $\ew(V_1\cup V_2)$. 
On the other hand, in the generalized $2$-to-$2$ Connected Wedge Theorem by refs. \cite{zhao2025proof,LL2025superadditivity,lima2025sufficientGCWT}, another bulk region 
\begin{equation}
    \mathcal{S}_E = \ew(V_1\cup V_2)\cap \ew(W_1\cup W_2)
\end{equation}
is used, whose nonemptiness was shown to follow from connectedness of $\ew(V_1\cup V_2)$ and $ \ew(W_1\cup W_2)$. 

The bulk region $S'_E$ involves $\NN_{V_i}$ and $\NN_{W_i}$ while the bulk region $\mathcal{S}_E$ involves $\NN_{Y_i}$ and $\NN_{X_i}$. The above observation on the inclusion relation between $\NN_{V_i}$ and $\NN_{Y_i}$ or between $\NN_{X_i}$ and $\NN_{W_i}$ would immediately imply that 
\begin{equation}
    \mathcal{S}_E'\subseteq \mathcal{S}_E,
\end{equation}
as one would expect from the two generalization of the $2$-to-$2$ Connected Wedge Theorem.

A similar remark applies to the $n$-to-$n$ scattering problem discussed here. The bulk region
    $$J^+[\ew(V_i)]\,\cap\, J^+[\ew(V_j)] \, \cap\, J^-[\ew(W_k)] \cap\, J^-[\ew(W_l)], \quad,i\neq j, k\neq l$$
involved in the necessary condition (Theorem \ref{thm:weaker-necessary}) is contained in the bulk region 
$$    J^-[RT(Y_k)]\cap J^-[RT(Y_l)] \cap J^+[\ew(X_i)] \cap J^+[\ew(X_j)]
$$
in the sufficient condition (Theorem \ref{thm:sufficient_n}).

\subsubsection{Comparison with the superadditivity proposal.}
Leutheusser and Liu~\cite{LL2025superadditivity} interpreted the geometric connectedness of entanglement wedges as the signal of superadditivity of bulk operator algebras, which provides the resource for the corresponding boundary quantum task.

For $n=2$ this equivalence has been proved~\cite{zhao2025proof,lima2025sufficientGCWT}.
Our results for $n>2$ show that the geometric condition required for $\mathcal{S}_E\neq\emptyset$ seems strictly stronger than simultaneous connectedness of the input and output wedges (see Section \ref{sec:S_E_nonempty_discussion}).
Thus a simple one‑to‑one correspondence between a single ``bridging operator'' and the geometric intersection region seems too naive for $n>2$; the full pairwise superadditivity structure among the output algebras appears to be needed.
It is conceivable that the layered reduction of
Theorem~\ref{thm:sufficient_n} could strengthen the entering‑ridge conditions to narrow the gap between the single ridge that follows from wedge connectivity and the full set of ridges required for $\mathcal{S}_E\neq\emptyset$, thereby bringing us closer to an equivalence; we leave a complete characterization as an open problem.

\subsection{Future Directions}
For multipartite $n>2$ scattering processes, the holographic dictionary appears less transparent than in the $n=2$ case. In particular, the condition $\mathcal{S}_E\neq \emptyset$ seems more restrictive than the mere connectedness of $\ewV$ and $\ewW$. Our analysis addresses this increased complexity by reducing the problem to a pairwise framework and yields several concrete geometric consequences. This reduction is justified by the intrinsic structure of holographic states, which are highly constrained and, in particular, cannot support purely GHZ-like multipartite entanglement \cite{holography-no-GHZ}.

While the results presented here do not yet constitute a complete or equivalent geometric characterization for $n>2$, they significantly extend the core principle established for $2$-to-$2$ scattering: nontrivial boundary quantum protocols are faithfully encoded in specific, nonlocal geometric signatures in the bulk spacetime. A fully equivalent characterization for general $n$ will likely require genuinely multipartite information-theoretic tools. In this regard, Theorem~\ref{thm:NMI-wedge} provides a complete information-theoretic characterization of when the entanglement wedge of multiple disjoint spacelike-separated regions is connected. We leave further exploration along these lines to future work.

In this paper, we focused on identifying necessary as well as sufficient conditions for the connectedness of multipartite entanglement wedges. An alternative and complementary perspective would be to formulate necessary or sufficient conditions directly for the existence of bulk-only scattering processes \cite{maldacena2017bulkpointsingularity}, namely configurations satisfying
\begin{equation}
    \bigcap_i J^+[c_i] \,\cap\, \bigcap_j J^-[r_j] \neq \emptyset
    \quad \text{while} \quad
    \bigcap_i \hat{J}^+[c_i] \,\cap\, \bigcap_j \hat{J}^-[r_j] = \emptyset.
\end{equation}
Finally, it would also be interesting to generalize the discussions in this work to higher-dimensional asymptotically AdS spacetimes.

\acknowledgments
I thank Edward Witten for getting me interested in this subject and Shing-Tung Yau for his enduring support. I also thank the anonymous referee for helpful suggestions to improve the manuscript.

\appendix

\section{Helly property for trees}\label{sec:app}

A family of subsets of a set is said to have the \emph{Helly property} if,
whenever every two members of the family intersect, the whole family has
a non‑empty intersection.
For intervals on the real line this is the classical Helly theorem (Helly~1913).
The same property holds for subtrees of a tree (see e.g.~\cite{bondy1976graph}):
\begin{lemma}[Helly property for trees]\label{lem:helly_tree}
Let $T$ be a finite tree and let $T_1,\dots,T_n$ be subtrees of $T$.
If $T_i\cap T_j\neq\varnothing$ for all $i\neq j$, then
$\bigcap_{i=1}^{n} T_i \neq \varnothing$.
\end{lemma}
In fact one can prove that the intersection is even a subtree, but the
vertex‑intersection form is sufficient for our purposes.
This result is a standard fact in combinatorial geometry and is the
precise substitute for ordinary convexity that appears in the proof of
Theorem~\ref{thm:S_E_condition}.

\bibliographystyle{JHEP}
\bibliography{Literature.bib}

@article{maypenington2020holographic,
  title={Holographic scattering requires a connected entanglement wedge},
  author={May, Alex and Penington, Geoff and Sorce, Jonathan},
  journal={Journal of High Energy Physics},
  volume={2020},
  number={8},
  pages={1--34},
  year={2020},
  publisher={Springer}
}

@article{LL2025superadditivity,
  title={Superadditivity in large {N} field theories and performance of quantum tasks},
  author={Leutheusser, Sam and Liu, Hong},
  journal={Physical Review D},
  volume={112},
  number={4},
  pages={046001},
  year={2025},
  publisher={APS}
}

@article{headrick2014causality,
  title={Causality \& holographic entanglement entropy},
  author={Headrick, Matthew and Hubeny, Veronika E and Lawrence, Albion and Rangamani, Mukund},
  journal={Journal of High Energy Physics},
  volume={2014},
  number={12},
  pages={1--36},
  year={2014},
  publisher={Springer}
}

@article{gao2000theorems,
  title={Theorems on gravitational time delay and related issues},
  author={Gao, Sijie and Wald, Robert M},
  journal={Classical and Quantum Gravity},
  volume={17},
  number={24},
  pages={4999},
  year={2000},
  publisher={IOP Publishing}
}

@article{maldacena1999AdSCFT,
  title={The large-{N} limit of superconformal field theories and supergravity},
  author={Maldacena, Juan},
  journal={International journal of theoretical physics},
  volume={38},
  number={4},
  pages={1113--1133},
  year={1999},
  publisher={Springer}
}

@article{witten1998AdScft,
  title={Anti de {S}itter space and holography},
  author={Witten, Edward},
  journal={arXiv preprint hep-th/9802150},
  year={1998}
}

@article{may2021region,
  title={Holographic quantum tasks with input and output regions},
  author={May, Alex},
  journal={Journal of High Energy Physics},
  volume={2021},
  number={8},
  pages={1--24},
  year={2021},
  publisher={Springer}
}

@article{may2022nton,
  title={The connected wedge theorem and its consequences},
  author={May, Alex and Sorce, Jonathan and Yoshida, Beni},
  journal={Journal of High Energy Physics},
  volume={2022},
  number={11},
  pages={1--65},
  year={2022},
  publisher={Springer}
}

@article{akers2020quantummaxmin,
  title={Quantum maximin surfaces},
  author={Akers, Chris and Engelhardt, Netta and Penington, Geoff and Usatyuk, Mykhaylo},
  journal={Journal of High Energy Physics},
  volume={2020},
  number={8},
  pages={1--43},
  year={2020},
  publisher={Springer}
}

@article{bousso2016quantumfocusing,
  title={Quantum focusing conjecture},
  author={Bousso, Raphael and Fisher, Zachary and Leichenauer, Stefan and Wall, Aron C},
  journal={Physical Review D},
  volume={93},
  number={6},
  pages={064044},
  year={2016},
  publisher={APS}
}

@article{wall2014maximin,
  title={Maximin surfaces, and the strong subadditivity of the covariant holographic entanglement entropy},
  author={Wall, Aron C},
  journal={Classical and Quantum Gravity},
  volume={31},
  number={22},
  pages={225007},
  year={2014},
  publisher={IOP Publishing}
}

@article{lima2025sufficientGCWT,
  title={On sufficient conditions for holographic scattering},
  author={Lima, Caroline and Pasterski, Sabrina and Waddell, Chris},
  journal={arXiv preprint arXiv:2509.26264},
  year={2025}
}

@article{zhao2025proof,
  title={A proof of the generalized Connected Wedge Theorem},
  author={Zhao, Bowen},
  journal={Journal of High Energy Physics},
  volume={2026},
  number={1},
  pages={29},
  year={2026},
  publisher={Springer}
}

@article{galloway1999maximum,
  title={Maximum principles for null hypersurfaces and null splitting theorems},
  author={Galloway, Gregory J},
  journal={arXiv preprint math/9909158},
  year={1999}
}

@article{maldacena2017bulkpointsingularity,
  title={Looking for a bulk point},
  author={Maldacena, Juan and Simmons-Duffin, David and Zhiboedov, Alexander},
  journal={Journal of High Energy Physics},
  volume={2017},
  number={1},
  pages={1--50},
  year={2017},
  publisher={Springer}
}

@book{waldGR,
  title={General relativity},
  author={Wald, Robert M},
  year={2024},
  publisher={University of Chicago press}
}

@article{holography-no-GHZ,
  title={Purely {GHZ}-like entanglement is forbidden in holography},
  author={Balasubramanian, Vijay and Kang, Monica Jinwoo and Cummings, Charlie and Murdia, Chitraang and Ross, Simon F},
  journal={arXiv preprint arXiv:2509.03621},
  year={2025}
}

@article{witten2020introQI,
  title={A mini-introduction to information theory},
  author={Witten, Edward},
  journal={La Rivista del Nuovo Cimento},
  volume={43},
  number={4},
  pages={187--227},
  year={2020},
  publisher={Springer}
}

@article{wittenBHT,
  title={Introduction to black hole thermodynamics},
  author={Witten, Edward},
  journal={The European Physical Journal Plus},
  volume={140},
  number={5},
  pages={430},
  year={2025},
  publisher={Springer}
}

@article{hernandezHubeny2024NMI,
  title={Holographic entropy inequalities and multipartite entanglement},
  author={Hern{\'a}ndez-Cuenca, Sergio and Hubeny, Veronika E and Jia, Hewei Frederic},
  journal={Journal of High Energy Physics},
  volume={2024},
  number={8},
  pages={1--41},
  year={2024},
  publisher={Springer}
}

@article{gary2009bulkonly,
  title={Local bulk {S}-matrix elements and conformal field theory singularities},
  author={Gary, Michael and Giddings, Steven B and Penedones, Joao},
  journal={Physical Review D—Particles, Fields, Gravitation, and Cosmology},
  volume={80},
  number={8},
  pages={085005},
  year={2009},
  publisher={APS}
}

@article{heemskerk2009bulkonly,
  title={Holography from conformal field theory},
  author={Heemskerk, Idse and Penedones, Joao and Polchinski, Joseph and Sully, James},
  journal={Journal of High Energy Physics},
  volume={2009},
  number={10},
  pages={079},
  year={2009},
  publisher={IOP Publishing}
}

@article{penedones2011bulkonly,
  title={Writing {CFT} correlation functions as {A}d{S} scattering amplitudes},
  author={Penedones, Joao},
  journal={Journal of High Energy Physics},
  volume={2011},
  number={3},
  pages={1--34},
  year={2011},
  publisher={Springer}
}

@article{RT2006formula,
  title={Holographic Derivation of Entanglement Entropy from the anti--de {S}itter Space/Conformal Field Theory Correspondence},
  author={Ryu, Shinsei and Takayanagi, Tadashi},
  journal={Physical review letters},
  volume={96},
  number={18},
  pages={181602},
  year={2006},
  publisher={APS}
}

@article{HRT2007covariant,
  title={A covariant holographic entanglement entropy proposal},
  author={Hubeny, Veronika E and Rangamani, Mukund and Takayanagi, Tadashi},
  journal={Journal of High Energy Physics},
  volume={2007},
  number={07},
  pages={062},
  year={2007},
  publisher={IOP Publishing}
}

@article{caminiti2024holographic,
  title={Holographic scattering and non-minimal RT surfaces},
  author={Caminiti, Jacqueline and Friedman-Shaw, Batia and May, Alex and Myers, Robert C and Papadoulaki, Olga},
  journal={Journal of High Energy Physics},
  volume={2024},
  number={10},
  pages={1--47},
  year={2024},
  publisher={Springer}
}

@article{caminiti2026geodesics,
  title={The geodesics less traveled: Nonminimal RT surfaces and holographic scattering},
  author={Caminiti, Jacqueline and Lima, Caroline and Myers, Robert C},
  journal={Journal of High Energy Physics},
  volume={2026},
  number={3},
  pages={197},
  year={2026},
  publisher={Springer}
}

@book{bondy1976graph,
  title={Graph theory with applications},
  author={Bondy, John Adrian and Murty, Uppaluri Siva Ramachandra and others},
  volume={290},
  year={1976},
  publisher={Macmillan London}
}

\end{document}